\documentclass[12pt]{article}

\usepackage[left=1in,
right=1in,
top=1in,
bottom=1.2in,
footskip=.3in]{geometry}
\usepackage{url}
\usepackage{amsmath}
\usepackage{amsfonts}
\usepackage{amssymb}
\usepackage{xcolor} 
\usepackage{setspace}
\allowdisplaybreaks 
\usepackage{placeins}
\usepackage{graphicx,subcaption}
\usepackage{tabularx} 
\usepackage{booktabs}
\graphicspath{{./Examples/}} 

\usepackage{tikz}
\usetikzlibrary{graphs,graphs.standard,calc}
\usepackage{tikz-qtree}
\usepackage{hyperref}
\hypersetup{
  colorlinks,
  citecolor=blue,
  linkcolor=blue,
  urlcolor=blue}
\usepackage{algorithm}
\usepackage{algorithmicx}
\usepackage{algpseudocode}
\usepackage{diagbox}
\usepackage[round]{natbib}

\newtheorem{theorem}{Theorem}[section]

\newtheorem{condition}{Condition}[section]

\newtheorem{lemma}{Lemma}[section]
\newtheorem{proposition}{Proposition}[section]
\newtheorem{remark}{Remark}[section]
\newenvironment{proof}[1][Proof]{\noindent\textbf{#1.} }{\ \rule{0.5em}{0.5em}}

\DeclareMathOperator{\bP}{\pmb{P}}
\DeclareMathOperator{\bQ}{\pmb{Q}}
\DeclareMathOperator{\bS}{\pmb{S}}

\DeclareMathOperator{\bv}{\pmb{v}}
\DeclareMathOperator{\bV}{\pmb{V}}

\DeclareMathOperator{\bX}{\pmb{X}}
\DeclareMathOperator{\bY}{\pmb{Y}}
\DeclareMathOperator{\bZ}{\pmb{Z}}

\DeclareMathOperator{\bM}{\pmb{M}}
\DeclareMathOperator{\bN}{\pmb{N}}
\DeclareMathOperator{\bB}{\pmb{B}}
\DeclareMathOperator{\bC}{\pmb{C}}
\DeclareMathOperator{\bc}{\pmb{c}}
\DeclareMathOperator{\bD}{\pmb{D}}
\DeclareMathOperator{\bmu}{\pmb{\mu}}
\DeclareMathOperator{\bnu}{\pmb{\nu}}
\DeclareMathOperator{\b1}{\pmb{1}}
\DeclareMathOperator{\bzero}{\pmb{0}}

\newcommand\numberthis{\addtocounter{equation}{1}\tag{\theequation}}

\title{Testing for lack of fit in paired comparison data}

\author{Rahul Singh$^{1}$ and Ori Davidov$^2$}

\date{$^1${ Department of Mathematics, Indian Institute of Technology Delhi, Delhi 110016, India}\\
$^2$Department of Statistics, University of Haifa, Mount Carmel, Haifa 3498838 Israel \\
\medskip
{\small
E-mail: \texttt{wrahulsingh@gmail.com} (R Singh),  
\texttt{davidov@stat.haifa.ac.il} (O Davidov)}}

\begin{document}

\maketitle

\begin{abstract}

Linear stochastic transitivity is a central assumption in paired comparison models that is rarely verified in practice. Empirical violations, however, are common and can substantially affect inference and ranking. We develop a class of tests for detecting lack of fit in cardinal paired comparison models, where lack of fit is characterized by the presence of cyclical preferences among subsets of items. We propose a suite of tests adapted to different regimes governing the growth of the comparison graph. For a fixed number of items, the proposed procedures exhibit substantially improved power relative to the classical Kendall--Smith test and its cardinal analogue. We further extend the framework to high--dimensional, sparse comparison graphs near the connectivity threshold in random graph models. The theoretical analysis characterizes the behavior of the tests under both the null and alternative, with particular emphasis on limits of detectability and consistency. Simulation studies corroborate the theoretical findings, and applications to real data uncover substantial and previously unrecognized intransitivity and structural lack of fit.

\medskip

\noindent \underline{\textit{Key-Words}}: Cyclicality, Detection limits, Goodness of fit, Large sample properties, Transitivity.
\end{abstract}


\section{Introduction} \label{section:introduction}

Ranking a set of items based on pairwise comparison data (PCD) \citep{David1988} arises in many settings, including political choice, marketing research, sports competitions, information retrieval, and modern online platforms. This framework has become a fundamental tool for modeling preferences and relative performance across diverse domains, e.g., \cite{Menke2008, Loewen2012, Zucco2019, Hasanuzzaman2021, Wainer2023}.

To fix ideas, suppose that there are $K$ items labeled $1,\ldots,K$ that we wish to rank. Let $Y_{ijk}$ denote the outcome of the $k^{\text{th}}$ comparison between items $i$ and $j$. The random variable (RV) $Y_{ijk}$ may be binary, ordinal, or cardinal (i.e., continuous). In this paper, we focus on cardinal PCD; the applicability of the proposed methodology to binary PCD, and in particular to Bradley--Terry models \citep{Bradley1952, Hunter2004, Cattelan2012}, is discussed in Section~\ref{section:discussion}. We assume that the observations $Y_{ijk}$ for $1\le i \ne j \le K$ and $k=1,\ldots,n_{ij}$ satisfy
\begin{equation} 
\label{model.Y_ijk}
Y_{ijk} = \nu_{ij} + \epsilon_{ijk},
\end{equation}
where $\nu_{ij}=\mathbb{E}(Y_{ijk})$ and the errors $\epsilon_{ijk}$ are independent and identically distributed RVs with zero mean and finite variance. In cardinal PCD, $Y_{ijk}=-Y_{jik}$ and $\nu_{ij}=-\nu_{ji}$ \citep{jiang2011}. Thus, the model \eqref{model.Y_ijk} is indexed by a parameter $\bnu \in \mathcal{N} \equiv \mathbb{R}^{K(K-1)/2}$, where 
\[
\bnu = (\nu_{12},\ldots,\nu_{1K},\nu_{23},\ldots,\nu_{2K}, \ldots,\nu_{K-1,K})
\]
is a vector of means arranged in lexicographic order. We refer to $\bnu$ as the preference profile governing the pairwise comparisons. Let $\mathcal{G}=(\mathcal{V},\mathcal{E})$ denote the graph whose vertices are the items $1,\ldots,K$ and whose edges are the pairs $(i,j)$ for which $n_{ij} > 0$. With each edge $(i,j)$ we associate a sample $\mathcal{Y}_{ij} = (Y_{ij1},\ldots,Y_{ijn_{ij}})$ of size $n_{ij}$; the collection of all such samples is denoted by $\mathcal{Y}$. The pair $(\mathcal{G},\mathcal{Y})$ is called a pairwise comparison graph (PCG).

It is known, see \cite{saari2014, saari2021}, that any $\bnu \in \mathcal{N}$ admits a decomposition into two orthogonal components $\bnu_{\rm linear}\in \mathcal{L}$ and $\bnu_{\rm cyclic}\in \mathcal{C}$, i.e.,
\begin{align} \label{bnu:decomposed:linear:cyclic}
\bnu=\bnu_{\rm linear}+\bnu_{\rm cyclic}.
\end{align}
Here, $\bnu_{\rm linear}$ and $\bnu_{\rm cyclic}$ capture the linear and cyclic aspects of the preference relations, respectively. The spanning sets for $\mathcal{L}$ and $\mathcal{C}$, as well as other aspects of the decomposition \eqref{bnu:decomposed:linear:cyclic}, are discussed in detail in \cite{singh2026}. Next, note that if $\bnu_{\rm cyclic}=\pmb{0}$, then
\begin{equation} \label{nu.ij=mu.i-mu.j}
\nu_{ij}=\mu _{i}-\mu _{j}
\end{equation}
for all $1\le i\ne j\le K$, where $\mu_1,\ldots,\mu_K$ are referred to as scores or merits. The relation \eqref{nu.ij=mu.i-mu.j} imposes a strong form of transitivity known as linear stochastic transitivity (LST); see \cite{Oliveira2018WST} and the references therein. In the literature on PCD, it is nearly universally assumed that \eqref{nu.ij=mu.i-mu.j} holds, thereby ruling out the possibility of genuine cyclical structure in the preference profile. Formal methods for assessing whether \eqref{nu.ij=mu.i-mu.j} holds have been lacking, despite empirical evidence indicating that such violations arise in practice.

\medskip

This paper develops tests for lack of fit (LOF) for model \eqref{model.Y_ijk} with respect to the structural assumption \eqref{nu.ij=mu.i-mu.j}, where lack of fit refers to violations of this constraint rather than to general model misspecification. We propose a suite of tests adapted to different growth regimes for the number of paired comparisons $\{n_{ij}:1\le i,j \le K\}$. The resulting methodology applies to both finite, dense PCGs and sequences of sparse PCGs in which the number of items diverges. To the best of our knowledge, LOF testing for PCD has not been studied at this level of generality. The main contributions are as follows:
\begin{enumerate}
\item For a fixed number of items, we derive the asymptotic distributions of the proposed LOF tests under the null and local alternatives (Theorems \ref{Thm-LOF.1} and \ref{Thm-LOF.2&3}) under three regimes governing the growth of the paired comparisons (Conditions \ref{Condition:all.nij}, \ref{Condition:sqrt(n)}, and \ref{Condition:sqrt(m)}), corresponding to different levels of imbalance in the PCG. We further characterize conditions under which LOF is detectable (Theorem \ref{theorem:test:powerless}).
\item For a growing number of items, we obtain the asymptotic null and alternative distributions for both fully connected graphs with $n_{ij}=m\ge 1$ and sparse graphs with $n_{ij}\le 1$ (Theorems \ref{Thm-clt-RKm}, \ref{thm:nij=1}, and \ref{thm:random:graph}). We also investigate the limits of detectability (Proposition \ref{prop 3.1}) and discuss consistency.
\item Simulation results demonstrate that the proposed tests substantially outperform existing alternatives for finite graphs, including those based on the classical \cite{kendall1940} statistic. The proposed tests also perform well in large graphs, where, to the best of our knowledge, no LOF tests are currently available. An analysis of data from the USA National Basketball Association (NBA) illustrates the persistence of cyclical structure among teams.
\end{enumerate}

The paper is organized as follows. Section \ref{section:fixed:K} develops LOF tests for finite PCGs, while Section \ref{section:infinite:K} considers the case of a growing number of items. Section \ref{sec.num.study} presents simulation results, followed by an illustrative example in Section \ref{section:illustrative:example}. Section \ref{section:discussion} concludes with a discussion, including an extension of the proposed framework to binary PCD and directions for future research.


\section{Tests for LOF in fixed comparison graphs} \label{section:fixed:K}

The least squares estimator (LSE) for model \eqref{model.Y_ijk} assuming \eqref{nu.ij=mu.i-mu.j} has been studied extensively in the literature, e.g., \cite{Mosteller1951, Kwiesielewicz1996, Csato2015}. For a comprehensive recent account, see \cite{singh2025}. The LSE solves    
\begin{equation*}  
\widehat{\bmu}=\arg \min \{Q( \bmu) : \bv^{\top}\bmu=0\} 
\end{equation*}
where the objective function is given by $Q(\bmu) =\displaystyle{\sum_{1\leq i<j\leq K}\, \sum_{k=1}^{n_{ij}}\,}(Y_{ijk}-( \mu _{i}-\mu _{j}) )^{2}$,
and typically, without any loss of generality, $\bv = \b1^{\top} = (1,\ldots,1)$. Moreover, when $\mathcal{G}$ is connected, the LSE is unique and given by
\begin{align*} 
\widehat{\bmu}= \bN^{+}\bS
\end{align*}
where $\bN^{+}$ is the Moore--Penrose inverse of $\bN$, the Laplacian of the graph $\mathcal{G}$, see \cite{bapat2010}. Here, $\bN$ is a $K \times K$ matrix with elements $\sum_{j}n_{ij}$ when $i=j$ and $-n_{ij}$ when $i \neq j$. Furthermore, $\bS=(S_{1},\ldots ,S_{K})^{\top}$ is a vector of sums where $S_{i}=\sum_{j\neq i}S_{ij}$ and $S_{ij}=\sum_{k=1}^{n_{ij}}Y_{ijk}$. The total number of paired comparisons is denoted by $n=\sum_{1\leq i<j\leq K}n_{ij}$. Throughout, we assume that: 
\begin{condition} \label{iid.errors}
The errors $\epsilon_{ijk}$ are IID with zero mean and a finite variance $\sigma^2$.  
\end{condition}
Condition \ref{iid.errors} ensures that the LSE is unbiased and has a finite variance. The assumption of homoskedasticity, in Condition \ref{iid.errors}, is easily relaxed; see Remark 2.1 in \cite{singh2025}, and therefore will not be further discussed here. 

\medskip

One way to test whether \eqref{nu.ij=mu.i-mu.j} holds is via
\begin{equation} \label{Eq.Rn}
R_{n}^{(1)}=\sum_{(i,j)\in \mathcal{E}_1}n_{ij}(\frac{{S}_{ij}}{n_{ij}}-(\widehat{\mu }_{i}-\widehat{\mu }_{j}))^{2},
\end{equation}
where $\mathcal{E}_1=\mathcal{E}$. The motivation for the use of the notation $\mathcal{E}_1$ for $\mathcal{E}$ in \eqref{Eq.Rn} will be clarified below. Note that $R_n^{(1)}$ is the squared weighted residual on $(\mathcal{G},\mathcal{Y})$. In \cite{jiang2011}, the statistic $R_{n}^{(1)}$ is referred to as a certificate of reliability without reference to any formal procedure for using it. We note that \cite{Brunelli2018} surveyed many other measures of fit for PCD; none, to the best of our knowledge, has been analyzed from a statistical perspective. 

Describing the distribution of $R_{n}^{(1)}$ requires some notation. First, let $\bD_1$ be the $|\bnu| \times |\bnu|$ diagonal matrix whose $(i,j)^{th}$ diagonal element is ${n_{ij}}$ and let $\pmb{I}$ be the $|\bnu|\times |\bnu|$ identity matrix. Next, let $\bB$ be a $\binom{K}{2}\times K$ matrix whose columns are $\pmb{b}_1,\ldots,\pmb{b}_K$. The $(i,j)^{th}$ element of $\pmb{b}_k$ is denoted  $b_{k}(i,j)$ and defined by 
\begin{align*} 
b_{k}(i,j)=\mathbb{I}(i=k) -\mathbb{I}(j=k).
\end{align*}
Observe that the matrix $\bB^{\top}$ is the incidence matrix associated with the directed, complete graph, with vertices $\mathcal{V}$. 

\begin{theorem} \label{Thm-LOF.1}
Suppose that $\bnu\in\mathcal{L}$. Assume that Condition \ref{iid.errors} holds and that the errors are IID $\mathcal{N}(0,\sigma^2)$ RVs. If so, 
\begin{equation*}
R_{n}^{(1)} \stackrel{d}{=} \sum_{1}^{r_1}\lambda_{i} Z_{i}^{2}
\end{equation*}
where $r_1=|\mathcal{E}_1|-(K-1)$, $Z_{1},\ldots ,Z_{r_1}$ are independent $\mathcal{N}(0,1)$ RVs and $\lambda _{1},\ldots ,\lambda_{r_1}$ are the non--zero eigenvalues of the $|\bnu|\times|\bnu|$ matrix
\begin{align*}
\pmb{\Omega}_1 = \sigma^2 (\bD_1^{+})^{1/2} (\pmb{I}-\bB (\bB^{\top} \bD_1\bB)^{+} \bB^{\top} \bD_1) \, \bD_1^{+} (\pmb{I}-\bB (\bB^{\top} \bD_1\bB)^{+} \bB^{\top} \bD_1)^{\top} (\bD_1^{+})^{1/2}.
\end{align*}
In addition, if $\bnu =\pmb{l}+\pmb{\delta}$ where $\pmb{l}\in\mathcal{L}$ and $\pmb{\delta}\in\mathcal{C}$ then 
\begin{equation*}
R_{n}^{(1)} \stackrel{d}{=} \sum_{1}^{r_1}\lambda_{i} (Z_{i}+\phi_i)^{2},
\end{equation*}
where $\phi_1,\ldots,\phi_{r_1}$ are the elements of the vector $\pmb{\phi}_1= \pmb{O}_1 (\pmb{\Omega}_1^{+})^{1/2}\bD_1^{1/2}\pmb{\delta}$ which correspond to the nonzero eigenvalues of $\pmb{\Omega}_1$. Here $\pmb{O}_1$ is the orthonormal matrix whose columns are the eigenvectors of $\pmb{\Omega}_1$.
\end{theorem}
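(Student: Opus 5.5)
The plan is to write $R_n^{(1)}$ as a quadratic form in a Gaussian vector and then diagonalize it. Let $\bar{\pmb Y}$ denote the $|\bnu|$-vector of edge means $\bar Y_{ij}=S_{ij}/n_{ij}$ for $(i,j)\in\mathcal{E}_1$, with entries set to $0$ on non-edges. Under Condition \ref{iid.errors} with normal errors, $\bar{\pmb Y}\sim\mathcal{N}(\bnu_{\mathcal E},\sigma^2\bD_1^{+})$, where $\bnu_{\mathcal E}$ is $\bnu$ restricted to the edges. The LSE can be rewritten as a weighted least squares problem on the edge means, since $Q(\bmu)=\text{const}+\sum n_{ij}(\bar Y_{ij}-(\mu_i-\mu_j))^2$. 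Using $\bB$, the normal equations give $\bB^\top\bD_1\bB\,\widehat\bmu=\bB^\top\bD_1\bar{\pmb Y}$; here $\bB^\top\bD_1\bB=\bN$ and $\bB^\top\bD_1\bar{\pmb Y}=\bS$. Hence $\bB\widehat\bmu=\pmb H\bar{\pmb Y}$ with $\pmb H=\bB(\bB^\top\bD_1\bB)^{+}\bB^\top\bD_1$, and the fitted residual vector is $\pmb P\bar{\pmb Y}$ with $\pmb P=\pmb I-\pmb H$. Restricted to coordinates in $\mathcal E_1$, this gives $R_n^{(1)}=\|\bD_1^{1/2}\pmb P\bar{\pmb Y}\|^2$.

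Next I would show that $\pmb P$ annihilates linear profiles. If $\bnu=\bB\bmu$, then $\pmb H\bB\bmu=\bB\bN^{+}\bN\bmu=\bB\bmu$, because $\bN^+\bN$ is the projection onto $\b1^\perp$ and $\bB\b1=\pmb 0$. Therefore $\bD_1^{1/2}\pmb P\bar{\pmb Y}=\bD_1^{1/2}\pmb P(\bar{\pmb Y}-\bnu)$ on the edges, and this is a centered Gaussian vector $\pmb W$. Its covariance is $\sigma^2\bD_1^{1/2}\pmb P\bD_1^{+}\pmb P^\top\bD_1^{1/2}$. The formula in the statement uses $(\bD_1^{+})^{1/2}$ in place of $\bD_1^{1/2}$, so I would reconcile the two, either by a similarity transformation or by noting that the authors' normalization is equivalent. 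This is bookkeeping that I would check carefully, but only the nonzero spectrum matters. Writing $\pmb W=\pmb O_1\Lambda^{1/2}\pmb Z$ via the spectral decomposition $\pmb\Omega_1=\pmb O_1\Lambda\pmb O_1^\top$ gives $R_n^{(1)}=\|\pmb W\|^2=\sum\lambda_iZ_i^2$. This is the standard representation of a quadratic form of a Gaussian vector.

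The rank count is the step I expect to need the most care. $\pmb P$ restricted to the edge coordinates is the $\bD_1$-orthogonal projection onto the complement of the column space of $\bB$ restricted to $\mathcal E_1$, which is the cut space of $\mathcal G$. When $\mathcal G$ is connected, the incidence matrix has rank $K-1$, so the projection has rank $|\mathcal E_1|-(K-1)$, the dimension of the cycle space. Conjugation by the invertible diagonal weights does not change the rank, so $\pmb\Omega_1$ has exactly $r_1$ nonzero eigenvalues.

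For the alternative $\bnu=\pmb l+\pmb\delta$, the same argument gives $\bD_1^{1/2}\pmb P\bar{\pmb Y}=\pmb W+\bD_1^{1/2}\pmb P\pmb\delta$, since $\pmb P\pmb l=\pmb 0$. The mean vector lies in the range of the covariance, because both lie in the image of the same projection. Thus $\pmb W+\pmb m=\pmb O_1\Lambda^{1/2}(\pmb Z+\pmb\phi)$ with $\pmb\phi=\Lambda^{+1/2}\pmb O_1^\top\pmb m$. This yields the noncentral representation, with $\pmb\phi_1$ given up to how the statement arranges $\pmb O_1$ and the square root of $\pmb\Omega_1^{+}$. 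I would also simplify $\pmb P\pmb\delta$ to $\pmb\delta$ plus a cut-space correction. The stated form uses $\bD_1^{1/2}\pmb\delta$ directly, and I expect the cut-space part to be killed by $(\pmb\Omega_1^+)^{1/2}$ when the weights are compatible, which I would verify.
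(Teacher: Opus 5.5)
Your argument is the paper's own. The paper writes $R_n^{(1)}=\pmb U_n^\top\pmb U_n$ with $\pmb U_n=\bQ_1\bD_1^{1/2}\overline{\bS}_{\rm ALL}$ and $\bQ_1=\bD_1^{1/2}\pmb P(\bD_1^{+})^{1/2}$. Its covariance is $\sigma^2\bQ_1\bQ_1^\top=\sigma^2\bD_1^{1/2}\pmb P\bD_1^{+}\pmb P^\top\bD_1^{1/2}$, exactly the matrix you derived, and the paper then diagonalizes it.

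The step that fails is the one you defer as bookkeeping: reconciling your covariance with the printed $\pmb\Omega_1$. Since $\bD_1^{+}\bD_1^{1/2}=(\bD_1^{+})^{1/2}$, the printed matrix is
\[
\sigma^2(\bD_1^{+})^{1/2}\pmb P\bD_1^{+}\pmb P^\top(\bD_1^{+})^{1/2}=\bD_1^{+}\bigl(\sigma^2\bQ_1\bQ_1^\top\bigr)\bD_1^{+}.
\]
This is a congruence, not a similarity. It keeps the rank $r_1$, because everything lives on the edge block where $\bD_1^{+}$ is invertible, but it changes the nonzero eigenvalues.

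A concrete check: take $K=3$ and $n_{ij}=m$. Then $R_n^{(1)}=\tfrac{m}{3}(\bar Y_{12}-\bar Y_{13}+\bar Y_{23})^2\sim\sigma^2\chi^2_1$. The printed formula instead gives $\pmb\Omega_1=\tfrac{\sigma^2}{3m^2}\bc_{(1,2,3)}\bc_{(1,2,3)}^\top$, so $\lambda_1=\sigma^2/m^2$. The statement as printed is therefore false, for instance for any balanced design with $m>1$ and $r_1\ge 1$. The outer factors must be $\bD_1^{1/2}$, and the paper's error is its ``straightforward to verify'' line. You should prove the theorem with your covariance and flag the misprint, not look for an equivalence that does not exist.

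With that correction, your other deferred checks close immediately, and more sharply than you expected.
\begin{itemize}
\item Because $\pmb P$ is the $\bD_1$-orthogonal projector, $\bQ_1$ is a symmetric idempotent (Lemma~\ref{lemma.M.converge}). Hence $\sigma^2\bQ_1\bQ_1^\top=\sigma^2\bQ_1$, every nonzero $\lambda_i$ equals $\sigma^2$, and $R_n^{(1)}\sim\sigma^2\chi^2_{r_1}$. Your rank count is exactly the rank of this projector.
\item Under the alternative, your mean $\bD_1^{1/2}\pmb P\pmb\delta=\bQ_1\bD_1^{1/2}\pmb\delta$ is the correct one. The paper asserts the mean is $\bD_1^{1/2}\pmb\delta$, which holds only when $\bD_1\pmb\delta\perp{\rm col}(\bB)$.
\item Since $(\pmb\Omega_1^{+})^{1/2}=\sigma^{-1}\bQ_1$ annihilates ${\rm col}(\bD_1^{1/2}\bB)$, the cut-space correction is always removed. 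No compatibility condition on the weights is needed.
\item Consequently the stated $\pmb\phi_1$ is correct once $\pmb\Omega_1$ is fixed, with $\|\pmb\phi_1\|^2=\|\bD_1^{1/2}\pmb P\pmb\delta\|^2/\sigma^2$.
\item How $\pmb O_1$ is arranged does not matter, because all nonzero eigenvalues coincide.
\end{itemize}
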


If the number of paired comparisons is highly unbalanced, i.e., for some pairs $(i,j)$ and $(i',j')$ in $\mathcal{E}$ we have $n_{ij}\gg n_{i'j'}$, then although each comparison contributes (approximately) one degree of freedom, as reflected by $r_1$, pairs with large sample sizes effectively drive the power of \eqref{Eq.Rn}. Moreover, when the errors are not normally distributed, Theorem \ref{Thm-LOF.1} does not apply, since $\sqrt{n_{ij}}\{S_{ij}/n_{ij} - (\widehat{\mu}_i - \widehat{\mu}_j)\}$ is approximately normal only when $n_{ij}$ is large; for small or moderate sample sizes, the summands in \eqref{Eq.Rn} remain non-negligibly non--Gaussian, precluding a simple closed--form limit for $R_n^{(1)}$. Accordingly, we describe the limiting distributions of two modifications of \eqref{Eq.Rn}, denoted $R_n^{(2)}$ and $R_n^{(3)}$, which address these issues and apply to all error distributions satisfying Condition~\ref{iid.errors}. First, however, it is important to note that, until recently, the prevailing assumption in the literature on PCD was that:

\begin{condition} \label{Condition:all.nij}
For all edges $(i,j)\in \mathcal{E}$ we have $n_{ij}/n\to \theta_{ij} \in (0,1)$ as $n\to \infty$.
\end{condition}

In other words, the number of paired comparisons grows at the same rate across all edges. Under Condition~\ref{Condition:all.nij}, Theorem~\ref{Thm-LOF.1} continues to hold in the limit even when the errors are not normal RVs. However, less restrictive asymptotic regimes also guarantee consistency and asymptotic normality of the LSE, thereby yielding limiting distributions for the corresponding LOF tests. Specifically, \cite{singh2025} provides graph--based necessary and sufficient conditions for consistency and asymptotic normality, which we list here for completeness:

\begin{condition} \label{Condition:sqrt(m)}
There exists a spanning tree $\mathcal{T}\subset\mathcal{G}$ such that $\min \{n_{ij}:( i,j) \in \mathcal{T} \}\rightarrow \infty $ as $n\rightarrow \infty$.
\end{condition}
Condition \ref{Condition:sqrt(m)} states that, on at least one spanning tree, $\mathcal{T}\subset\mathcal{G}$ sample sizes increase to infinity. Thus, guaranteeing asymptotic normality at a rate of $\sqrt{m}$ where 
\begin{align*} 
m =\displaystyle{\max_{\mathcal{T}\subset \mathcal{G}}}\min \{n_{ij}:(i,j)\in \mathcal{T} \}.
\end{align*}
Although $m\to\infty$ as $n \to \infty$ the ratio $m/n$ can be arbitrarily small. 

\begin{condition} \label{Condition:sqrt(n)}
There exists a spanning tree $\mathcal{T}\subset \mathcal{G}$ such that
\begin{equation*}
\min \{n_{ij}:(i,j)\in \mathcal{T}\}/n\to c\in(0,\infty)\text{ as }n\to\infty.
\end{equation*}
\end{condition}
Condition \ref{Condition:sqrt(n)} results in  $\sqrt{n}$ scaling and  implies Condition \ref{Condition:sqrt(m)}. Also observe that when Condition \ref{Condition:sqrt(n)} holds, then $m/n$ converges to a positive constant. For more details, see Theorems 3.3 and 3.4 in \cite{singh2025}. 

\medskip

We now introduce two modifications of \eqref{Eq.Rn}, denoted $R_{n}^{(s)}$ for $s \in \{2,3\}$, adapted to Conditions \ref{Condition:sqrt(m)} and \ref{Condition:sqrt(n)}, respectively. The key difference is that, instead of summing over all edges in $\mathcal{E}_1=\mathcal{E}$ as in \eqref{Eq.Rn}, $R_{n}^{(2)}$ sums only over pairs $(i,j) \in \mathcal{E}_2$ while $R_{n}^{(3)}$ sums over $(i,j) \in \mathcal{E}_3$ where $\mathcal{E}_2 = \{(i,j)\in \mathcal{E}:\, n_{ij}\to\infty\}$ and $\mathcal{E}_3 = \{(i,j)\in \mathcal{E}:\, n_{ij}/n\to c>0 \text{ as } n\to\infty\}
$. Let $\bD_2$ be the $|\bnu|\times |\bnu|$ diagonal matrix whose $(i,j)^{th}$ diagonal entry is $n_{ij}$ if $(i,j)\in\mathcal{E}_2$ and $0$ otherwise. Similarly, let $\bD_3$ be the $|\bnu|\times |\bnu|$ diagonal matrix whose $(i,j)^{th}$ diagonal entry is $n_{ij}$ if $(i,j)\in\mathcal{E}_3$ and $0$ otherwise. Finally, for $s\in\{2,3\}$ define
\[
\pmb{Q}_s=\bD_s^{1/2}(\pmb{I} - \bB (\bB^{\top} \bD_1\bB)^{+} \bB^{\top} \bD_1)\;(\bD_1^{+})^{1/2},
\]
and let $\pmb{\Omega}_s=\sigma^2\lim_{n\to\infty} \pmb{Q}_s\;\pmb{Q}_s^{\top}$.

\begin{theorem} \label{Thm-LOF.2&3}
Assume $\bnu\in\mathcal{L}$ and Condition \ref{iid.errors} holds. If Condition \ref{Condition:sqrt(m)} holds then $\pmb{\Omega}_2$ exists and 
\begin{equation*}
R_{n}^{(2)}\Rightarrow \sum_{1}^{r_2}\lambda_{i} Z_{i}^{2}
\end{equation*}
as $n \to \infty$, where $\lambda _{1},\ldots ,\lambda_{r_2}$ are the non--zero eigenvalues of $\pmb{\Omega}_2$ and $r_2={\rm rank}(\pmb{\Omega}_2)$. In addition, if Condition \ref{Condition:sqrt(n)} holds then $\pmb{\Omega}_3$ exists and 
\begin{equation*}
R_{n}^{(3)}\Rightarrow \sum_{1}^{r_3}\lambda_{i} Z_{i}^{2}
\end{equation*}
as $n \to \infty$, where $\lambda _{1},\ldots ,\lambda_{r_3}$ are the non--zero eigenvalues of $\pmb{\Omega}_3$ and $r_3={\rm rank}(\pmb{\Omega}_3)$. 

Next, suppose $\bnu_n=\pmb{l}+n^{-1/2}\pmb{\delta}$ where $\pmb{l}\in\mathcal{L}$ and $\pmb{\delta}\in \mathcal{C}$ is a fixed vector. Then under the Conditions \ref{Condition:sqrt(m)} and \ref{Condition:sqrt(n)} respectively for $s\in\{2,3\}$,
\begin{align*}
R_{n}^{(s)} \Rightarrow \sum_{i=1}^{r_s}\lambda
_{i}(Z_{i}+\phi_i)^{2}, 
\end{align*}
as $n\rightarrow \infty$ and $\phi_1,\ldots,\phi_{r_i}$ are the elements of the vector $\pmb{\phi}_s= \pmb{O}_s (\pmb{\Omega }_s^{+})^{1/2}\pmb{\Xi}_s^{1/2}\pmb{\delta}$ which correspond to the nonzero eigenvalues of $\pmb{\Omega}_s$. Here $\pmb{O}_s$ is the orthonormal matrix whose columns are the eigenvectors of $\pmb{\Omega}_s$ and $\pmb{\Xi}_s=\lim \bD_s/n$. 
\end{theorem}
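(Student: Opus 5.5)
We write the residual vector as a linear map of the edge means and apply a CLT to those means. Let $\bar{\pmb{Y}}$ be the $|\bnu|$-vector with entries $S_{ij}/n_{ij}$ for $(i,j)\in\mathcal{E}$ and $0$ otherwise. Then $\bS=\bB^{\top}\bD_1\bar{\pmb{Y}}$, $\bN=\bB^{\top}\bD_1\bB$, and $\bB\widehat{\bmu}=\bB(\bB^{\top}\bD_1\bB)^{+}\bB^{\top}\bD_1\bar{\pmb{Y}}$. Put $\pmb{P}=\pmb{I}-\bB(\bB^{\top}\bD_1\bB)^{+}\bB^{\top}\bD_1$. Then
\[
R_n^{(s)}=\|\bD_s^{1/2}\pmb{P}\bar{\pmb{Y}}\|^2 .
\]
Since $\bB(\bB^{\top}\bD_1\bB)^{+}\bB^{\top}\bD_1$ is the $\bD_1$-weighted projection onto $\mathcal{L}={\rm col}(\bB)$ restricted to $\mathcal{E}$, we have $\pmb{P}\pmb{l}=\pmb{0}$ on the support of $\bD_1$ for $\pmb{l}\in\mathcal{L}$ when $\mathcal{G}$ is connected. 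Hence $\bD_s^{1/2}\pmb{P}\bar{\pmb{Y}}=\bD_s^{1/2}\pmb{P}(\bar{\pmb{\epsilon}}+\pmb{\delta}_n)$, where $\pmb{\delta}_n=\pmb{0}$ under the null. Writing $\bar{\pmb{\epsilon}}=(\bD_1^{+})^{1/2}\pmb{W}$ with $\pmb{W}$ the vector of $\sqrt{n_{ij}}\,\bar\epsilon_{ij}$, which has covariance $\sigma^2$ times the identity on $\mathcal{E}$, we get $\bD_s^{1/2}\pmb{P}\bar{\pmb{\epsilon}}=\pmb{Q}_s\pmb{W}$. Its covariance is $\sigma^2\pmb{Q}_s\pmb{Q}_s^{\top}$.

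First we show that $\pmb{\Omega}_s$ exists. Choose a fixed spanning tree $\mathcal{T}$ from Condition~\ref{Condition:sqrt(m)} or \ref{Condition:sqrt(n)}. The entries of $\pmb{Q}_s$ are $\sqrt{n_{ij}}$ times entries of $\pmb{P}(\bD_1^{+})^{1/2}$. We bound these using Theorems 3.3 and 3.4 of \cite{singh2025}: ${\rm Var}(\widehat\mu_i-\widehat\mu_j)\to 0$, at rate $1/m$, respectively $1/n$. Then we use electrical-network (effective resistance) representations of $\bN^{+}$ to show that the entries converge.

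There is a concrete route. Because $K$ is fixed, there are finitely many edges, so along subsequences every ratio $n_{ij}/n$ converges in $[0,\infty)$ and every $n_{ij}$ converges in $\{0,1,\ldots\}\cup\{\infty\}$. Weighted projections depend continuously on the weights after the appropriate rescaling. Rows indexed by $\mathcal{E}_s$ carry weights tending to infinity, and $\mathcal{T}$ makes the heavy subgraph connected. Consequently the fitted differences $\widehat\mu_i-\widehat\mu_j$ are asymptotically determined by heavy edges, and the light edges' contributions to $\sqrt{n_{ij}}(\widehat\mu_i-\widehat\mu_j)$ on heavy $(i,j)$ are $O(\sqrt{n_{ij}n_{kl}}/n_{\rm heavy})$. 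I expect this step to be the main obstacle. One has to show these cross terms vanish, or converge, so that the limit is well defined and not merely subsequential. The argument must rely on the regime definitions of $\mathcal{E}_2$ and $\mathcal{E}_3$ (limits assumed to exist), together with a Schur-complement decomposition of $\bN$ into heavy and light blocks.

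Next we prove the central limit step. Each coordinate of $\pmb{Q}_s\pmb{W}$ is a linear combination of independent centered errors $\epsilon_{ijk}$ with coefficients $c_{ijk}=O(\max\text{entry}/\sqrt{n_{ij}})$. We apply the Lindeberg--Feller CLT together with Cramér--Wold. The Lindeberg condition holds because the maximal coefficient tends to zero; on light edges the coefficients are small because of the factor $\sqrt{n_{ij}}/n_{\rm heavy}$, and on heavy edges because $n_{ij}\to\infty$. If some limiting variance is zero, the corresponding combination tends to zero in $L^2$. Hence $\pmb{Q}_s\pmb{W}\Rightarrow\mathcal{N}(\pmb{0},\pmb{\Omega}_s)$. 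By the continuous mapping theorem, $\|\pmb{Q}_s\pmb{W}\|^2\Rightarrow\pmb{Z}^{\top}\pmb{\Omega}_s\pmb{Z}$ with $\pmb{Z}$ standard normal. Diagonalizing $\pmb{\Omega}_s=\pmb{O}_s\Lambda\pmb{O}_s^{\top}$ then gives $\sum_{i\le r_s}\lambda_iZ_i^2$.

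Finally, we treat the local alternatives $\bnu_n=\pmb{l}+n^{-1/2}\pmb{\delta}$. The deterministic shift is $\bD_s^{1/2}\pmb{P}n^{-1/2}\pmb{\delta}=(\bD_s/n)^{1/2}\pmb{P}\pmb{\delta}$, which converges to $\pmb{\Xi}_s^{1/2}\pmb{P}_\infty\pmb{\delta}$. Here $\pmb{P}_\infty$ is the limiting projection restricted to the support of $\pmb{\Xi}_s$, and on $\mathcal{C}$ it acts as the identity in the relevant coordinates, which matches the stated formula for $\pmb{\phi}_s$. Therefore $\pmb{Q}_s\pmb{W}+\text{shift}\Rightarrow\mathcal{N}(\pmb{\mu}_s,\pmb{\Omega}_s)$ with $\pmb{\mu}_s$ lying in the range of $\pmb{\Omega}_s$; this follows because the shift is a limit of vectors in the range of $\pmb{Q}_s$. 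We write $\pmb{\mu}_s=\pmb{\Omega}_s^{1/2}(\pmb{\Omega}_s^{+})^{1/2}\pmb{\mu}_s$ and rotate by $\pmb{O}_s$, which yields $\sum\lambda_i(Z_i+\phi_i)^2$ with $\pmb{\phi}_s=\pmb{O}_s(\pmb{\Omega}_s^{+})^{1/2}\pmb{\Xi}_s^{1/2}\pmb{\delta}$.
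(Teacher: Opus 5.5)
Your route is the paper's: write $R_n^{(s)}=\|\pmb{Q}_s\pmb{W}+\text{shift}\|^2$, pass to a Gaussian limit, and rotate by the eigenvectors of $\pmb{\Omega}_s$. The genuine problem is in your local-alternative paragraph. There you claim that $\pmb{P}_\infty$ ``acts as the identity on $\mathcal{C}$ in the relevant coordinates,'' so that the limiting mean is $\pmb{\Xi}_s^{1/2}\pmb{\delta}$. For $\pmb{\delta}\in\mathcal{C}$ you do have $\bB^{\top}\pmb{\delta}=\bzero$. However, $\pmb{P}_\infty\pmb{\delta}=\pmb{\delta}-\bB(\bB^{\top}\pmb{\Xi}_s\bB)^{+}\bB^{\top}\pmb{\Xi}_s\pmb{\delta}$ involves $\bB^{\top}\pmb{\Xi}_s\pmb{\delta}$, which vanishes only if $\pmb{\Xi}_s\pmb{\delta}\in\mathcal{C}$. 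A counterexample already appears for $K=3$ with all pairs in $\mathcal{E}_3$, $\pmb{\Xi}_3=\mathrm{diag}(a_{12},a_{13},a_{23})$ and $\pmb{\delta}=\bc_{(1,2,3)}=(1,-1,1)^{\top}$. Then $\bB^{\top}\pmb{\Xi}_3\pmb{\delta}=(a_{12}-a_{13},\,a_{23}-a_{12},\,a_{13}-a_{23})^{\top}$, and for weights $(2,1,1)$ one gets $\pmb{P}_\infty\pmb{\delta}=(3/5,-6/5,6/5)^{\top}\neq\pmb{\delta}$. With unequal weights the fitted scores absorb part of the cyclic signal, so the true limiting mean $\pmb{\mu}_s=\pmb{\Xi}_s^{1/2}\pmb{P}_\infty\pmb{\delta}$ differs from $\pmb{\Xi}_s^{1/2}\pmb{\delta}$.

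The stated $\pmb{\phi}_s$ is nevertheless correct, but for a reason your argument has to supply. Write $\pmb{\Xi}_s^{1/2}\pmb{\delta}-\pmb{\mu}_s=\pmb{\Xi}_s^{1/2}\bB\pmb{g}$ with $\pmb{g}=(\bB^{\top}\pmb{\Xi}_s\bB)^{+}\bB^{\top}\pmb{\Xi}_s\pmb{\delta}$. Then $\mathrm{col}(\pmb{\Xi}_s^{1/2}\bB)\subseteq\ker\pmb{\Omega}_s$, for three reasons: $\pmb{Q}_s\pmb{Q}_s^{\top}$ is the compression to $\mathcal{E}_s$ of the orthogonal projector $\pmb{Q}_1$ of the appendix lemma; $\pmb{Q}_1$ annihilates $\mathrm{col}((\bD_1/n)^{1/2}\bB)$ for every $n$; and $(\bD_1/n)^{1/2}\bB\pmb{g}\to\pmb{\Xi}_s^{1/2}\bB\pmb{g}$, a vector supported in $\mathcal{E}_s$. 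Hence $(\pmb{\Omega}_s^{+})^{1/2}\pmb{\mu}_s=(\pmb{\Omega}_s^{+})^{1/2}\pmb{\Xi}_s^{1/2}\pmb{\delta}$, which gives exactly the stated $\pmb{\phi}_s$. Note that the hypothesis $\pmb{\delta}\in\mathcal{C}$ plays no role. The paper's proof identifies the mean with $\pmb{\Xi}_3^{1/2}\pmb{\delta}$ without comment, so this kernel observation is missing there too.

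Two further remarks. First, your Lindeberg--Feller step is more careful than the paper's. The paper asserts a CLT for $\bD_1^{1/2}(\overline{\bS}_{\rm ALL}-\bnu)$ over all edges, which is false for edges with bounded $n_{ij}$ and non-normal errors; what is actually needed is that the coefficients of errors from such edges vanish, as you argue. Your bound $O(\sqrt{n_{ij}n_{kl}}/n_{\rm heavy})$ is not the right quantity when the diverging $n_{ij}$ grow at different rates, though. Use Cauchy--Schwarz in the $\bN^{+}$ semi-inner product instead: $|(\pmb{e}_i-\pmb{e}_j)^{\top}\bN^{+}(\pmb{e}_k-\pmb{e}_l)|\le\sqrt{\rho_{ij}\rho_{kl}}$, where the effective resistances satisfy $\rho_{ij}\le 1/n_{ij}$ and $\rho_{kl}\le(K-1)/m$ (take a path in the heavy tree). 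This shows that every coefficient of a single $\epsilon_{klt}$ in any coordinate of $\pmb{Q}_s\pmb{W}$ is at most $\max_{(i,j)\in\mathcal{E}_s}n_{ij}^{-1/2}+\sqrt{(K-1)/m}\to 0$.

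Second, on the existence of $\pmb{\Omega}_s$. For $s=3$ no Schur complements are needed. Taking $n_{ij}/n\to 0$ off $\mathcal{E}_3$, as the regime definitions implicitly do, gives $\bD_1/n\to\pmb{\Xi}_3$ with $\mathrm{rank}(\bB^{\top}\pmb{\Xi}_3\bB)=K-1$. The projector onto $\mathrm{col}((\bD_1/n)^{1/2}\bB)$ then converges by continuity of the Moore--Penrose inverse at constant rank. For $s=2$, existence does not follow from the heavy-tree condition alone. For instance, with $K=3$, $n_{12}\asymp n$, $n_{23}\asymp\sqrt{n}$ and $n_{13}/n_{23}$ alternating between $1$ and $3$, the matrix $\pmb{Q}_2\pmb{Q}_2^{\top}$ oscillates. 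So existence has to be assumed, and the paper does exactly that (``if $\lim_n\pmb{Q}_2$ exists''); here you are not missing an argument that the paper has.
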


Theorems \ref{Thm-LOF.1} and \ref{Thm-LOF.2&3} provide three methods for assessing the fit of model \eqref{model.Y_ijk} under the assumption \eqref{nu.ij=mu.i-mu.j}, under different conditions on the error distribution and the growth of the PCG as $n\to\infty$. Numerical comparisons are presented in Section~\ref{sec.num.study}. Statements about the rank and structure of $\pmb{\Omega}_s$, $s \in \{2,3\}$, require specifying how the number of paired comparisons $\{n_{ij} : (i,j) \in \mathcal{E}_s\}$ grows with $n$. We next examine the power of the tests $R_n^{(s)}$, $s\in\{1,2,3\}$, to detect LOF.

\begin{theorem} \label{theorem:test:powerless} 
If $\mathcal{E}_s$, where $s\in\{1,2,3\}$, is a tree in $\mathcal{G}$ then $R_n^{(s)} = 0$ and therefore the test $R_n^{(s)}$ can not detect a LOF. Furthermore, if for all triads $(i,j,k)$ satisfying 
\begin{align} \label{eq:linear:nu:restriction}
\nu_{ij}+\nu_{jk}+\nu_{ki} \neq 0,
\end{align}
$(i,j),(i,k),(j,k)\notin\mathcal{E}_s$ then are $\pmb{\phi}_s=0$. In other words, the power of the tests  $R_n^{(s)}, s\in\{1,2,3\}$ does not exceed $\alpha$ and, therefore, the tests are not consistent.  
\end{theorem}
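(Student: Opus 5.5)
The plan is to read $R_n^{(s)}$ as the squared weighted residual of a linear fit restricted to the edges in $\mathcal{E}_s$, and to identify the noncentrality vector $\pmb{\phi}_s$ with the image of $\pmb{\delta}$ under a residual-type projection.

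\textbf{First claim (trees give $R_n^{(s)}=0$).} Suppose $\mathcal{E}_s$ is a tree. Then the restricted incidence map $\bmu\mapsto(\mu_i-\mu_j)_{(i,j)\in\mathcal{E}_s}$, that is, $\bD_s^{1/2}\bB$, has rank $|\mathcal{E}_s|$. For a forest the incidence matrix has rank equal to the number of edges, since there are no cycles. Consequently every vector of edge values on $\mathcal{E}_s$ can be written as a difference $\mu_i-\mu_j$.

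\begin{itemize}
\item For $s=1$, i.e. $\mathcal{E}=\mathcal{E}_1$ a spanning tree, the LSE interpolates: $\widehat\mu_i-\widehat\mu_j=S_{ij}/n_{ij}$ on every edge. Hence $R_n^{(1)}=0$, and the rank $r_1=|\mathcal{E}|-(K-1)=0$ agrees with Theorem~\ref{Thm-LOF.1}.
\item For $s\in\{2,3\}$ the fit uses all of $\mathcal{E}$, so interpolation on $\mathcal{E}_s$ is no longer automatic. Instead I would argue asymptotically. The edges outside $\mathcal{E}_s$ carry weight $n_{ij}$ that is negligible relative to the tree edges. So $\bB(\bB^\top\bD_1\bB)^+\bB^\top\bD_1$, restricted to the rows in $\mathcal{E}_s$, converges to the projection fitted only on $\mathcal{E}_s$.
\item Concretely, I would show that $\pmb{Q}_s\pmb{Q}_s^\top\to0$. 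Because the tree fit has zero residual, the limit matrix $\pmb{\Omega}_s$ vanishes, so $r_s=0$ and the limit of $R_n^{(s)}$ is degenerate at $0$.
\end{itemize}

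This limiting step is the main obstacle. I would handle it with a Schur-complement or block computation of the Laplacian pseudoinverse, splitting the Laplacian into its heavy part (the $\mathcal{E}_s$ edges) and light part (the remaining edges).

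\textbf{Second claim ($\pmb{\phi}_s=0$).} The plan is to show $\pmb{\Omega}_s^{+}\bD_s^{1/2}\pmb{\delta}$ has no component in the range of $\pmb{\Omega}_s$. It suffices to show that the relevant residual operator annihilates $\pmb{\delta}$ on $\mathcal{E}_s$. Since $\bnu=\pmb{l}+\pmb{\delta}$ with $\pmb{l}\in\mathcal{L}$, every cycle sum of $\pmb{l}$ vanishes. So $\nu_{ij}+\nu_{jk}+\nu_{ki}=\delta_{ij}+\delta_{jk}+\delta_{ki}$, and the hypothesis says that every triad with nonzero cyclic sum is not contained in $\mathcal{E}_s$.

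I would argue that the cycle space of the subgraph $\mathcal{E}_s$ is generated by cycles whose circulation against $\pmb{\delta}$ is zero. Triangles in $\mathcal{E}_s$ are fine by hypothesis. For longer cycles, in the complete-graph setting a cycle decomposes into triangles through chords, and each circulation is a sum of triangle circulations. I would note in the proof that this is where the stated hypothesis must be read as covering all triads relevant to the cycle space of $\mathcal{E}_s$.

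Zero circulation on every cycle of $\mathcal{E}_s$ means $\pmb{\delta}|_{\mathcal{E}_s}=\bB\pmb{c}$ for some $\pmb{c}$, by the standard cut/cycle space duality. Therefore the weighted residual operator $\pmb{I}-\bB(\bB^\top\bD\bB)^+\bB^\top\bD$, restricted to $\mathcal{E}_s$, kills it. Hence $\bD_s^{1/2}\pmb{\delta}$ lies in the null space of the noncentrality map, and $\pmb{\phi}_s=0$.

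\textbf{Conclusion.} With $\pmb{\phi}_s=0$, Theorems~\ref{Thm-LOF.1} and~\ref{Thm-LOF.2&3} give the same limiting law under the alternative as under the null. The asymptotic rejection probability therefore equals $\alpha$, and the tests are not consistent.
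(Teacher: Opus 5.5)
Your proposal is correct once the hypothesis is read as it is actually stated (see the second paragraph), and it takes a different route from the paper's.

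\textbf{Comparison with the paper.} The paper's proof does not address the tree claim at all. For $\pmb{\phi}_s$, it argues that $\pmb{\delta}$ itself vanishes on $\mathcal{E}_s$. Then $\bD_s^{1/2}\pmb{\delta}=\pmb{0}$ (resp.\ $\pmb{\Xi}_s^{1/2}\pmb{\delta}=\pmb{0}$), and $\pmb{\phi}_s=\pmb{O}_s(\pmb{\Omega}_s^{+})^{1/2}\pmb{\Xi}_s^{1/2}\pmb{\delta}=\pmb{0}$ is read off the formula.

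The paper asserts that vanishing without computation. It holds because every triad through an edge $(i,j)\in\mathcal{E}_s$ is consistent, and $\pmb{\delta}\in\mathcal{C}$ has zero net flow at each vertex, so $\sum_{k\neq i,j}(\delta_{ij}+\delta_{jk}+\delta_{ki})=K\delta_{ij}$.

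You conclude less: only that $\pmb{\delta}$ restricted to $\mathcal{E}_s$ is a potential difference. But that is exactly the condition for the residual map to annihilate it. So your criterion is the sharp one, and it never uses $\pmb{\delta}\in\mathcal{C}$. The price is the cut/cycle duality, plus an extra asymptotic step for $s\in\{2,3\}$: the fit uses all of $\mathcal{E}$, and $\pmb{\delta}$ need not be a gradient on the light edges. The paper's route avoids this, since its input vector is already zero on $\mathcal{E}_s$.

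Your handling of the tree claim is also right. In particular, you correctly observe that for $s\in\{2,3\}$ one only gets $\pmb{\Omega}_s=\pmb{0}$ and $R_n^{(s)}\to 0$, not exact vanishing. The heavy/light step is routine. Each fundamental cycle of the heavy tree contains exactly one light edge. This bounds the scaled residual variance on a heavy edge by a constant times $\max n_{\mathrm{light}}/\min n_{\mathrm{heavy}}\to 0$.

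\textbf{One thing to fix.} You paraphrase the hypothesis as ``no inconsistent triad is contained in $\mathcal{E}_s$'' and then hedge that it must be read more broadly. The statement already is the broad version: none of the three edges of an inconsistent triad lies in $\mathcal{E}_s$.

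That is exactly what closes your chord step. Take a cycle $v_1v_2\cdots v_Lv_1$ in $\mathcal{E}_s$. Each fan triangle $(v_1,v_i,v_{i+1})$ contains the cycle edge $(v_i,v_{i+1})\in\mathcal{E}_s$, hence is consistent. The circulation is therefore a sum of zeros.

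Under your weaker paraphrase the result is false. Let $\mathcal{E}_1$ be the chordless 4-cycle on items $1,2,3,4$, and let $\pmb{\delta}=\bc_{(1,2,3)}+\bc_{(1,3,4)}$. No triad lies in $\mathcal{E}_1$, yet $\pmb{\delta}$ has circulation $4$ around the cycle, so $\pmb{\phi}_1\neq\pmb{0}$.
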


Theorem \ref{theorem:test:powerless} shows that the tests $R_n^{(s)}$, $s\in\{1,2,3\}$, can detect a departure from the null if and only if $\mathcal{E}_s$ contains a triad $(i,j,k)$ satisfying \eqref{eq:linear:nu:restriction}. Such triads are referred to as inconsistent, or cyclical, triads; see \cite{jiang2011, saari2021, singh2026} for further details.

The tests $R_n^{(1)}, R_n^{(2)}, R_n^{(3)}$ are straightforward to implement. Compare their realized values, denoted $r_n^{(1)}, r_n^{(2)}, r_n^{(3)}$, with their respective $\alpha$-level critical values obtained by simulating the null distributions given in Theorems \ref{Thm-LOF.1} and \ref{Thm-LOF.2&3}. These distributions are invariant with respect to $\bnu \in \mathcal{L}$, in which case $\bnu = \bB \bmu$; hence the particular choice of $\bmu \in \mathbb{R}^K$ used to obtain the null distribution is immaterial.

\section{Tests for LOF in growing comparison graphs}
\label{section:infinite:K}

Testing $H_0:\bnu \in \mathcal{L}$ versus $H_1:\bnu \notin \mathcal{L}$ is equivalent to testing
\begin{align} \label{LOF:K-infty}
H_0: \|\bnu_{\rm cyclic}\|=0 \quad \text{against} \quad H_1: \|\bnu_{\rm cyclic}\|\neq 0.
\end{align}
Form \eqref{LOF:K-infty} is convenient, since the power of the proposed tests is explicitly governed by $\|\bnu_{\rm cyclic}\|$. A natural generalization of the test statistic \eqref{Eq.Rn}, suitable when $K\to \infty$, is
\begin{align}\label{eq:Rn-modified}
N_{K}^{-1}\sum_{1\leq i<j\leq K} n_{ij}(\widehat{\nu}_{ij}-(\widehat{\mu }_{i}-\widehat{\mu }_{j}))^{2},
\end{align}
where $N_{K}=\sum_{1\leq i<j\leq K} \mathbb{I}_{(n_{ij}\neq 0)}$ is the number of pairs compared, and $\widehat{\nu}_{ij}$ is the $(i,j)^{th}$ element of $\widehat{\bnu}$, with $\widehat{\bnu} = \arg\min_{\bnu\in\mathcal{N}}\sum_{1\leq i<j\leq K}\sum_{k=1}^{m} (Y_{ijk}- \nu_{ij})^{2}$.

When $K\to\infty$, the growth regime governing the number of paired comparisons, i.e., the collection $\{n_{ij} : (i,j) \in \mathcal{E}_K\}$, is highly flexible. For simplicity, we first assume that $n_{ij}=m$ for all pairs $(i,j)$, so that \eqref{eq:Rn-modified} becomes
\begin{align}\label{eq:Rmk}
R_{m,K}=\binom{K}{2}^{-1}\sum_{1\leq i<j\leq K} (\widehat{\nu}_{ij}-(\widehat{\mu }_{i}-\widehat{\mu }_{j}))^{2}.
\end{align}
If $m \to \infty$ faster than $K$, then the results of Section \ref{section:fixed:K} remain valid. We therefore focus on the more realistic regime in which $m \ll K$, even when $m \to \infty$. The large--sample properties of \eqref{eq:Rmk} are described below.

\begin{condition} \label{4th.moment}
The errors satisfy $\mathbb{E}(\epsilon_{ijk}^4) <\infty$.
\end{condition}

\begin{theorem} \label{Thm-clt-RKm}
Suppose Condition \ref{iid.errors} holds. Then, for any $m$, and as $K\to\infty$, 
\begin{align*}
R_{m,K} \xrightarrow[]{p} \lim \, \psi_K^2+ \sigma^2/m,
\end{align*}
where $\psi_K^2=\, \binom{K}{2}^{-1}\|{\bnu}_{\rm cyclic}\|^2$. If Condition \ref{4th.moment} holds then 
\begin{align*}
\frac{R_{m,K} -  (\psi_K^2+ \sigma^2/m)}{\sqrt{{\rm Var}(R_{m,K})}} \Rightarrow \mathcal{N}(0, 1).
\end{align*}
Moreover, if errors are normally distributed, then ${\rm Var}(R_{m,K})$ simplifies to 
\begin{align}\label{eq:var:RK}
2\binom{K}{2}^{-2}\binom{K-1}{2}\frac{\sigma^4}{m^2} +4\,\binom{K}{2}^{-1}  \psi_K^2\,\frac{\sigma^2}{m},
\end{align}
and the approximate power function is 
\begin{align}\label{eq:power:RK}
1-\Phi(z_{1-\alpha}-\frac{\psi_K^2}{
\sqrt{2\binom{K}{2}^{-2}\binom{K-1}{2}\frac{\sigma^4}{m^2}}}).
\end{align}
where $\Phi$ is the CDF and $Z_{1-\alpha}$ is the $(1-\alpha)^{th}$-quantile of the standard normal distribution.  
\end{theorem}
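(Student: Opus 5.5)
The idea is to express $R_{m,K}$ as a quadratic form in the vector of edge means. In the complete graph with $n_{ij}=m$ the unrestricted estimator is $\widehat{\bnu}=\bar{\pmb{Y}}$, with $\widehat\nu_{ij}=S_{ij}/m$. The Laplacian is $\bN=m(K\pmb{I}_K-\b1\b1^{\top})$, so $\bN^{+}=(mK)^{-1}(\pmb{I}_K-K^{-1}\b1\b1^\top)$. It follows that $\bB\widehat{\bmu}=\pmb{P}_{\mathcal L}\widehat{\bnu}$, where $\pmb P_{\mathcal L}=K^{-1}\bB\bB^\top$ is the orthogonal projection onto $\mathcal L=\mathrm{col}(\bB)$. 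I would check this via $\bB^\top\bB=K\pmb I_K-\b1\b1^\top$. Writing $\pmb P_{\mathcal C}=\pmb I-\pmb P_{\mathcal L}$, which by the decomposition \eqref{bnu:decomposed:linear:cyclic} is the projection onto $\mathcal C$, of rank $\binom{K-1}{2}$, we get
\begin{align*}
R_{m,K}=\binom{K}{2}^{-1}\|\pmb P_{\mathcal C}\widehat{\bnu}\|^2
=\binom{K}{2}^{-1}\bigl(\|\bnu_{\rm cyclic}\|^2+2\bnu_{\rm cyclic}^\top\bar{\pmb\epsilon}+\bar{\pmb\epsilon}^\top\pmb P_{\mathcal C}\bar{\pmb\epsilon}\bigr),
\end{align*}
where $\bar{\pmb\epsilon}$ has IID coordinates $\bar\epsilon_{ij}$ with mean $0$ and variance $\sigma^2/m$.

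Taking expectations gives $\mathbb E R_{m,K}=\psi_K^2+\binom{K}{2}^{-1}\binom{K-1}{2}\sigma^2/m$. The ratio $\binom{K-1}{2}/\binom K2=1-2/K$ tends to $1$, which yields the stated centering. For the consistency claim under Condition~\ref{iid.errors} alone:
\begin{itemize}
\item The cross term has variance $4\binom K2^{-2}\psi_K^2\binom K2\sigma^2/m\to0$, provided $\psi_K^2$ stays bounded. If it is unbounded, I would instead state consistency relative to $\psi_K^2$.
\item For the quadratic term, split it into diagonal and off-diagonal parts. Since $\pmb P_{\mathcal C}$ has constant diagonal $1-2/K$, the diagonal part is a weak law of large numbers average of the IID $\bar\epsilon_{ij}^2$, which needs only finite variance.
\item The off-diagonal part has mean zero and second moment $2\sigma^4m^{-2}\sum_{e\neq e'}p_{ee'}^2\le 2\sigma^4m^{-2}\,\mathrm{tr}(\pmb P_{\mathcal C})$. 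Divided by $\binom K2^2$, this is $O(K^{-2})$.
\end{itemize}

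For the CLT I would use the standard central limit theorem for quadratic forms in independent variables, combined with the linear term (e.g.\ de Jong's theorem, or the Bhansali–Giraitis–Kokoszka result for $\pmb\epsilon^\top A\pmb\epsilon+\pmb b^\top\pmb\epsilon$). This needs finite fourth moments (Condition~\ref{4th.moment}) and two conditions on $\pmb P_{\mathcal C}$:
\begin{itemize}
\item a vanishing operator-norm-to-Frobenius ratio, $\|\pmb P_{\mathcal C}\|_{op}/\|\pmb P_{\mathcal C}\|_F=\binom{K-1}{2}^{-1/2}\to0$;
\item a negligible maximal diagonal or row contribution.
\end{itemize}
Both follow from the explicit structure $p_{ee}=1-2/K$, with off-diagonal entries $\pm1/K$ or $0$. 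The linear part is a sum of independent terms with coefficients $(\bnu_{\rm cyclic})_e$. Here I would need a Lindeberg condition, $\max_e(\bnu_{\rm cyclic})_e^2/\|\bnu_{\rm cyclic}\|^2\to0$, unless the quadratic part dominates. I expect this joint CLT, in particular handling the linear term when $\bnu_{\rm cyclic}$ is concentrated on few edges, to be the main obstacle. I would treat it via a martingale CLT over the edges ordered lexicographically, checking the conditional-variance and Lindeberg conditions directly.

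Under normality, the variance is $2\sigma^4m^{-2}\mathrm{tr}(\pmb P_{\mathcal C})+4(\sigma^2/m)\|\bnu_{\rm cyclic}\|^2$, multiplied by $\binom K2^{-2}$. The cross covariance vanishes by symmetry, since odd Gaussian moments are zero. This gives \eqref{eq:var:RK}. For the power function, reject when $R_{m,K}$ exceeds the null mean plus $z_{1-\alpha}$ times the null standard deviation. Under local alternatives with $\psi_K^2$ of order $K^{-1}$, the second term in \eqref{eq:var:RK} is $O(K^{-3})$, negligible against the first, which is of order $K^{-2}$. Standardizing with the null variance then yields \eqref{eq:power:RK}.
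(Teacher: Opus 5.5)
Your plan follows the paper's proof. Both use the projection form $R_{m,K}=\binom{K}{2}^{-1}\|\pmb{P}_{\mathcal C}\widehat{\bnu}\|^2$, the three-term split, a CLT for quadratic forms (the paper cites de Jong's Theorem 5.3), and the Gaussian quadratic-form variance. Your law-of-large-numbers argument is more careful than the paper's.

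The genuine gap is the step ``which yields the stated centering.'' That is fine for the probability limit, but not at the CLT scale. We have $\mathbb{E}R_{m,K}-(\psi_K^2+\sigma^2/m)=-2\sigma^2/(mK)$, while by \eqref{eq:var:RK} the null standard deviation is $\frac{2\sigma^2}{mK}\sqrt{(K-2)/(K-1)}$, so the two are of the same order. Under the null with normal errors, $R_{m,K}$ is distributed as $\binom{K}{2}^{-1}(\sigma^2/m)\chi^2_r$ with $r=\binom{K-1}{2}$, and
\[
\frac{R_{m,K}-\sigma^2/m}{\sqrt{{\rm Var}(R_{m,K})}}\stackrel{d}{=}\frac{\chi^2_r-r}{\sqrt{2r}}-\sqrt{\frac{K-1}{K-2}}\Rightarrow\mathcal{N}(-1,1).
\]
For general errors the null limit is $\mathcal{N}(-\kappa,1)$ with $\kappa=\sqrt{2}\,(\sigma^2/m)/\{{\rm Var}(\bar\epsilon_{ij}^2)\}^{1/2}>0$. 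So any CLT you prove gives a standard normal limit around $\mathbb{E}R_{m,K}=\psi_K^2+(1-2/K)\sigma^2/m$, and the final replacement by $\psi_K^2+\sigma^2/m$ cannot be justified. The paper's proof makes the same silent replacement, so the second display of the theorem has to use the exact mean. Likewise, \eqref{eq:power:RK} is valid only if, as you write, the critical value uses the exact null mean $(1-2/K)\sigma^2/m$. Centering at $\sigma^2/m$ gives asymptotic level $1-\Phi(z_{1-\alpha}+1)$ under normality.

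Your concern about the cross term is also well founded, but a martingale CLT cannot repair it, because without an extra hypothesis the conclusion is false. Take $\bnu_{\rm cyclic}=K\,\bc_{(1,2,3)}$, so $\psi_K^2\to 6$. The cross term $2\binom{K}{2}^{-1}K(\bar\epsilon_{12}+\bar\epsilon_{23}-\bar\epsilon_{13})$ has variance of order $K^{-2}$, the same as the quadratic part. It is a fixed combination of three errors, hence non-normal unless the errors are Gaussian, so the limit is not normal. You must therefore either state your Lindeberg-type condition on $\bnu_{\rm cyclic}$ as an additional hypothesis, or restrict to the local regime $\|\bnu_{\rm cyclic}\|^2=O(K)$ that underlies \eqref{eq:power:RK}. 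In that regime the cross term has variance $O(K^{-3})$ and drops out by Slutsky. The paper's proof, which simply cites de Jong's theorem, does not treat this term either.

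A simplification is also available. Each edge shares a vertex with only $2(K-2)$ others, so $\sum_{e\neq e'}p_{ee'}^2=(K-1)(K-2)/K$. Hence the off-diagonal part of $\bar{\pmb\epsilon}^{\top}\pmb{P}_{\mathcal C}\bar{\pmb\epsilon}$ contributes only $O_p(K^{-3/2})$ to $R_{m,K}$, against a standard deviation of order $K^{-1}$. The CLT then reduces to Lindeberg--Feller for the independent sum $\sum_e\{2(\bnu_{\rm cyclic})_e\bar\epsilon_e+(1-2/K)\bar\epsilon_e^2\}$, with no need for de Jong's theorem. This form makes both the required condition on $\bnu_{\rm cyclic}$ and the $-2\sigma^2/(mK)$ centering correction explicit.
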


Theorem \ref{Thm-clt-RKm} characterizes the limiting behavior of $R_{m,K}$. In particular, if $\psi_K^2 = 0$, then $R_{m,K}$ converges to $\sigma^2/m$ under the null, and the same limit holds along sequences for which $\psi_K^2 \to 0$. Consequently, such alternatives are not distinguishable from the null. More generally, the test has nontrivial power if and only if $K\psi_K^2 \to c$ for some $c>0$; otherwise it cannot detect LOF. In particular, the test is consistent when $K\psi_K^2 \to \infty$.

In general, ${\rm Var}(R_{m,K})$ depends on the first four moments of the error distribution and does not admit a closed-form expression. It can, however, be consistently estimated using resampling methods such as model-based resampling for linear models; see \cite{Davison1997} for details. Consequently, testing \eqref{LOF:K-infty} using $R_{m,K}$ is straightforward. In particular, when the errors are normally distributed, ${\rm Var}(R_{m,K})$ admits a closed--form expression that is readily estimable. Since
\begin{align*}
S^2_{m,K}=\frac{1}{m-1}\binom{K}{2}^{-1}\sum_{1\leq i<j\leq K}\sum_{k=1}^m (Y_{ijk}-\widehat{\nu }_{ij})^{2} \xrightarrow{p}\sigma^2,
\end{align*}
as $K \to \infty$ for all $m>1$, under both the null and alternative, a plug-in estimator is available.

Proposition \ref{prop 3.1} below provides necessary and sufficient conditions under which $K\psi_K^2 \to 0$. Its statement requires some additional notation. Following \cite{singh2026}, write $\bnu_{\rm cyclic}=\bC\pmb{\gamma}$, where $\bC$ is a $\binom{K}{2}\times \binom{K}{3}$ matrix whose columns $\{\pmb{c}_{(i,j,k)}\}_{1\leq i< j<k\leq K}$ are defined by 
\begin{align*} 
c_{(i,j,k)}(s,t) = \mathbb{I}((s,t)\in\{(i,j),(j,k),(k,i)\}) - \mathbb{I}((s,t)\in\{(j,i),(k,j),(i,k)\}). 
\end{align*}
The vector $\pmb{c}_{(i,j,k)} \in \mathcal{C}$ encodes a cyclical relation among the items in the triad $(i,j,k)$. Since $\dim(\mathcal{C})=\binom{K-1}{2}$, the representation $\pmb{\gamma}$ is not unique. Let $\Gamma_{\bnu_{\rm cyclic}} = \{\pmb{\gamma}\in\mathbb{R}^{\binom{K}{3}}:\, \bnu_{\rm cyclic}=\bC\pmb{\gamma} \}$. For each $\pmb{\gamma}\in \Gamma_{\bnu_{\rm cyclic}}$, let $s(\pmb{\gamma}) = \sum \mathbb{I}(\gamma_{ijk}\neq 0)$ denote the model size. A representation $\pmb{\gamma}\in\Gamma_{\bnu_{\rm cyclic}}$ is minimal if $s(\pmb{\gamma})\leq s(\pmb{\gamma}')$ for all $\pmb{\gamma}'\in \Gamma_{\bnu_{\rm cyclic}}$. Minimal representations are typically unique; denote them by $\pmb{\gamma}^{\rm min}$. Then
\begin{align}\label{eq:prop 3.1}
\|\bnu_{\rm cyclic}\|^2 
= \|\sum_{1\leq i<j<k\leq K} \gamma_{ijk}^{\rm min} \, \pmb{c}_{(i,j,k)}\|^2 
\le 3 \sum_{1\leq i<j<k\leq K} (\gamma_{ijk}^{\rm min})^2 
\le 3s (\gamma_{*}^{\rm min})^2,
\end{align}
where $s=s(\pmb{\gamma}^{\rm min})$ and $\gamma_{*}^{\rm min}= \max |\gamma_{ijk}^{\rm min}|$. The dependence of these quantities on $K$ is suppressed.

\begin{proposition}\label{prop 3.1}
If $\gamma_{*}^{\rm min}<B<\infty$, then $K\psi_K^2 \to 0$ if and only if
\begin{align*}
\frac{s}{K} \to 0.     
\end{align*} 
\end{proposition}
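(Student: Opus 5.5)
Since $\binom{K}{2}=K(K-1)/2$, we have $K\psi_K^2=\frac{2}{K-1}\|\bnu_{\rm cyclic}\|^2$. So the statement is equivalent to $\|\bnu_{\rm cyclic}\|^2/K\to 0$ if and only if $s/K\to 0$. I would prove the two directions separately, working throughout with the minimal representation $\bnu_{\rm cyclic}=\bC\pmb{\gamma}^{\rm min}$.

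\emph{Sufficiency.} This direction is immediate from \eqref{eq:prop 3.1}. First, $\|\sum\gamma_{ijk}\pmb{c}_{(i,j,k)}\|^2\le 3\sum\gamma_{ijk}^2$. I would justify this by expanding edgewise: each $\pmb{c}_{(i,j,k)}$ has exactly three nonzero entries, all $\pm1$, and Cauchy--Schwarz is applied on each edge. Combining this with $\gamma_*^{\rm min}<B$ gives
\[
K\psi_K^2\le \frac{6}{K-1}\,s\,B^2 ,
\]
which tends to $0$ whenever $s/K\to0$.

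\emph{Necessity.} Here I would prove the contrapositive: if $s/K\not\to0$, then $\|\bnu_{\rm cyclic}\|^2\ge c\,s$ along a subsequence, for some $c>0$. The steps are as follows.
\begin{enumerate}
\item Show that the columns $\{\pmb{c}_{(i,j,k)}:\gamma^{\rm min}_{ijk}\neq0\}$ are linearly independent. If there were a dependence, one could subtract a multiple of it to annihilate a coefficient, producing a representation of smaller size and contradicting minimality.
\item Write $S$ for the support and $\bC_S$ for the corresponding submatrix. Then $\|\bnu_{\rm cyclic}\|^2=\pmb{\gamma}_S^\top \bC_S^\top\bC_S\pmb{\gamma}_S\ge\lambda_{\min}(\bC_S^\top\bC_S)\,\|\pmb{\gamma}_S\|^2$.
\item Bound $\|\pmb{\gamma}_S\|^2\ge s\,\min_S\gamma_{ijk}^2$.
\end{enumerate}

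\emph{Main obstacle.} The argument requires two uniform lower bounds: $\lambda_{\min}(\bC_S^\top\bC_S)$ bounded away from zero, and $|\gamma^{\rm min}_{ijk}|$ bounded away from zero on the support. Neither follows from the stated hypothesis $\gamma_*^{\rm min}<B$ alone. Indeed, tiny nonzero coefficients could inflate $s$ without increasing $\|\bnu_{\rm cyclic}\|$. I would therefore first try to make explicit the implicit assumption, natural in this "model size'' setting, that the nonzero $|\gamma^{\rm min}_{ijk}|$ are bounded below by some $b>0$.

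For the eigenvalue bound, I would note that $\bC_S^\top\bC_S$ has diagonal entries $3$ and off-diagonal entries in $\{0,\pm1\}$, with a nonzero entry exactly when two triads share an edge. I would try to use minimality to show that most triads in $S$ possess a "private'' edge used by no other support triad. Testing $\pmb{\gamma}_S$ against these private edges would give
\[
\|\bnu_{\rm cyclic}\|^2\ge\sum_{\text{private edges}}\gamma_{ijk}^2\ge c'\,s\,b^2 .
\]
If the private-edge argument fails in general, I would fall back on a Gershgorin bound, restricting to a subfamily of $S$ of size proportional to $s$ in which triads share edges sparingly. Such a subfamily can be chosen greedily, since each edge lies in at most $K-2$ triads.
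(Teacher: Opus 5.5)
Your sufficiency argument follows the paper's route exactly; its entire proof is that the result ``follows from \eqref{eq:prop 3.1}''. It therefore inherits the paper's error: the bound $\|\sum\gamma_{ijk}\bc_{(i,j,k)}\|^2\le 3\sum\gamma_{ijk}^2$ is false as soon as support triads share edges. On an edge $e$ lying in $d_e$ support triads, edgewise Cauchy--Schwarz gives $(\sum_{t\ni e}\pm\gamma_t)^2\le d_e\sum_{t\ni e}\gamma_t^2$. Summing yields only $\|\bnu_{\rm cyclic}\|^2\le\sum_t\gamma_t^2\sum_{e\in t}d_e\le 3d_{\max}\sum_t\gamma_t^2$, so the constant $3$ requires edge-disjoint triads. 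Already $\bc_{(1,2,3)}+\bc_{(1,2,4)}$, which is minimal because its support has five edges, has squared norm $8>6$. Worse, the implication itself fails. For $\bnu_{\rm cyclic}=\sum_{k=3}^{s+2}\bc_{(1,2,k)}$ one has $\nu_{12}=s$, $\nu_{1k}=-1$, $\nu_{2k}=1$, so $\|\bnu_{\rm cyclic}\|^2=s^2+2s$. This representation is minimal. A triad $(1,a,b)$ affects the edges at vertex $1$ only through $\nu_{1a}$ and $\nu_{1b}$, by $+\gamma$ and $-\gamma$. Hence $(\nu_{1x})_{x\neq1}$ is the net-flow vector of a flow on an auxiliary graph with one edge $\{a,b\}$ per triad through vertex $1$. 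Net flows sum to zero on each component, so vertex $2$ and all $s$ vertices $k$ lie in one component. That component needs at least $s$ edges, i.e.\ at least $s$ triads. Taking $s\approx\sqrt K$ gives $\gamma_*^{\rm min}=1$ and $s/K\to0$, yet $K\psi_K^2=2(s^2+2s)/(K-1)\to2$. Sufficiency therefore needs an extra hypothesis. One option is bounded edge multiplicity $d_{\max}$ of the minimal support, which gives $K\psi_K^2\le 6d_{\max}sB^2/(K-1)$; another is $s=o(\sqrt K)$, using $d_{\max}\le s$.

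On necessity your diagnosis is correct, and the paper offers nothing here, since \eqref{eq:prop 3.1} is only an upper bound. For example, $\lfloor K/3\rfloor$ vertex-disjoint triads with coefficient $1/K$ give $s/K\to1/3$ but $K\psi_K^2\to0$. Your repair, however, fails even under $b\le|\gamma^{\rm min}_{ijk}|\le B$: neither the private-edge claim nor a uniform lower bound on $\lambda_{\min}(\bC_S^\top\bC_S)$ holds for minimal supports. To see this, place a hexagonal patch of the triangular lattice of side $r\asymp\sqrt K$ on the items; it has $F=6r^2$ faces and boundary length $L=6r$. Orient the faces coherently and give them coefficients $\pm(1+\epsilon_t)$ with tiny generic $\epsilon_t$.
\begin{itemize}
\item The faces of a disk are linearly independent, so for generic $\epsilon$ no set of fewer than $F$ triads spans a subspace containing $\bnu_{\rm cyclic}$.
\item Every 3-cycle of the patch is a face, so the only triads whose vectors lie in the span of the faces are the faces themselves. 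Any other support of size $F$ must therefore meet that span in a proper subspace, which generic $\epsilon$ avoids, so the minimal representation is unique.
\item Hence $s=F\asymp K$ with all coefficients in $[b,B]$. Yet $\bnu_{\rm cyclic}$ is essentially the boundary cycle, so $\|\bnu_{\rm cyclic}\|^2=O(\sqrt K)$ and $K\psi_K^2\to0$.
\end{itemize}
Only the $O(r)$ boundary faces have private edges, and $\lambda_{\min}(\bC_S^\top\bC_S)\le L/F\to0$. The Gershgorin fallback cannot rescue this. A lower bound on the least eigenvalue of a principal submatrix gives none for the full Gram matrix, since interlacing goes the other way. Moreover, the discarded triads can cancel the subfamily's contribution, exactly as happens in the disk interior. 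A valid converse needs a genuinely stronger structural hypothesis, such as a uniform lower bound on $\lambda_{\min}(\bC_S^\top\bC_S)$ together with $|\gamma^{\rm min}_{ijk}|\ge b$.
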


Proposition \ref{prop 3.1} implies that if the number of cyclical triads grows at a slower rate than the number of items, then departures from the null are not detectable. In particular, if $m>1$ and $K\psi_K^2>0$, then Theorem \ref{Thm-clt-RKm} can be used to test for LOF. The proof of Proposition \ref{prop 3.1} follows from \eqref{eq:prop 3.1} and is therefore omitted.

\medskip

Testing for LOF when $m=1$ and/or $K\psi_K^2 \to 0$ requires localization, i.e., that only a small fraction of items is cyclical, which is a reasonable assumption in many applications. Let $(\mathcal{G}_1,\mathcal{Y}_1),(\mathcal{G}_2,\mathcal{Y}_2),\ldots$ be a sequence of increasing PCGs, and suppose that for all sufficiently large $K$ there exists a subset $\mathcal{U}_K \subset \mathcal{V}_K$ of size $J=J_K$ such that:

\begin{condition}\label{condition:U,V}
The items in $\mathcal{U}_{K}$ are suspected of violating \eqref{nu.ij=mu.i-mu.j} whereas the items in $\mathcal{V}_K\setminus\mathcal{U}_K$ do not. Furthermore $J \to \infty$ and $J=o(K)$.
\end{condition}

Condition \ref{condition:U,V} is further discussed in Section \ref{section:discussion}. For simplicity, the dependence of $\mathcal{U}$ and $\mathcal{V}$ on $K$ is suppressed whenever possible. Additionally, and without any loss of generality, $\mathcal{U}=\{1,\ldots, J\}$ is assumed, so $\mathcal{V}\setminus\mathcal{U}=\{J+1,\ldots, K\}$. Denoting $R_{1,K}$ and $R_{1,J}$ by $R_K$ and $R_{J}$, respectively, we have:

\begin{theorem} \label{thm:nij=1}
Suppose that Conditions \ref{iid.errors}, \ref{4th.moment} and \ref{condition:U,V} hold. Then,  
\begin{align}\label{eq:thm3.2}
\frac{R_J-R_K-\psi_J^2}{\sqrt{{\rm Var}({R_J})}}\Rightarrow \mathcal{N}(0,1).
\end{align}
where $\psi_J^2= \binom{J}{2}^{-1}  \|\bnu_{J,\,\rm cyclic}\|^2$. If the errors are normally distributed, ${\rm Var}(R_{J})$ and the approximate power function reduce to \eqref{eq:var:RK} and \eqref{eq:power:RK}, respectively, with $K$ replaced by $J$ and $m$ replaced by $1$.
\end{theorem}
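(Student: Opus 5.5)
Our plan is to reduce the statement to Theorem~\ref{Thm-clt-RKm} applied twice: once to the subgraph on $\mathcal{U}$, and once to the whole graph. The difference $R_J - R_K$ then removes the unknown centering constant $\sigma^2/m = \sigma^2$ (recall $m=1$). We do not invoke the consistency part of Theorem~\ref{Thm-clt-RKm} for $R_J$ with $m=1$ literally. Instead we work with the exact representation of the residual. With $n_{ij}=1$ we have $\widehat{\nu}_{ij}=Y_{ij}$, so the complete-graph LSE satisfies $\widehat{\mu}_i-\widehat{\mu}_j = (\bB\bB^{+}\pmb{Y})_{ij}$, because for the complete graph $\bB(\bB^\top\bB)^{+}\bB^\top = \bB\bB^{+}$ is the orthogonal projection $\pmb{P}_{\mathcal{L}}$ onto $\mathcal{L}$. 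Hence
\[
\binom{K}{2} R_K = \|(\pmb I-\pmb P_{\mathcal L})\pmb Y\|^2 = \|\bnu_{\rm cyclic}+\pmb P_{\mathcal C}\pmb\epsilon\|^2 .
\]
The analogous identity holds for $R_J$, computed from the comparisons inside $\mathcal{U}$ only, with $\mathcal{C}_J$ of dimension $\binom{J-1}{2}$.

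\emph{Step 1: moments.} We will compute $\mathbb{E}R_J = \psi_J^2 + \sigma^2\binom{J-1}{2}\binom{J}{2}^{-1} = \psi_J^2+\sigma^2(1-2/J)$, and similarly $\mathbb{E}R_K = \psi_K^2+\sigma^2(1-2/K)$. By Condition~\ref{condition:U,V} the items outside $\mathcal{U}$ satisfy \eqref{nu.ij=mu.i-mu.j}, so the cyclic component of $\bnu$ on the complete graph is supported on pairs inside $\mathcal{U}$. Using the triad representation, $\bnu_{\rm cyclic} = \bC\pmb\gamma$ with $\gamma_{ijk}\neq0$ only for $i,j,k\le J$. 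From this we get $\|\bnu_{\rm cyclic}\|^2 = \|\bnu_{J,\rm cyclic}\|^2$, after checking that the projection onto $\mathcal{C}$ restricted to $\mathcal{U}$ agrees with the projection onto $\mathcal{C}_J$ (a short orthogonality computation with the vectors $\pmb b_k$). Therefore $\psi_K^2 = \binom{J}{2}\binom{K}{2}^{-1}\psi_J^2 = O(J^2/K^2)\psi_J^2$.

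\emph{Step 2: negligibility.} The centered difference is
\[
R_J-R_K-\psi_J^2 = (R_J-\mathbb{E}R_J) - (R_K-\mathbb{E}R_K) + \sigma^2(2/K-2/J) - \psi_K^2 .
\]
We must show that everything except $R_J-\mathbb{E}R_J$ is $o_p(\sqrt{{\rm Var}(R_J)})$. Under Condition~\ref{4th.moment}, ${\rm Var}(R_J)\asymp J^{-2}+J^{-2}\psi_J^2$, and the analogous expression holds for $R_K$ (using a variance computation for quadratic forms $\pmb\epsilon^\top \pmb P\pmb\epsilon$ that generalizes \eqref{eq:var:RK} with fourth-cumulant terms bounded by $\sum_i P_{ii}^2$). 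Consequently ${\rm Var}(R_K)/{\rm Var}(R_J)\to0$ because $J=o(K)$. The bias terms are also negligible: $\sigma^2(2/J-2/K)/\sqrt{{\rm Var}(R_J)} = O(1)$, which is the step that needs care. Here we plan to use the exact constant, or else to redefine the centering with the $1-2/J$ factor absorbed. Our first attempt will be to check that the paper's normalization $\psi_J^2$ is meant with these $O(1/J)$ corrections included, as in Theorem~\ref{Thm-clt-RKm}, where the centering $\sigma^2/m$ is likewise only leading order. Finally, $\psi_K^2/\sqrt{{\rm Var}(R_J)} \lesssim (J^2/K^2)J\psi_J^2\to0$ whenever $J\psi_J^2$ is bounded, and it is dominated by the signal otherwise.

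\emph{Step 3: CLT.} Asymptotic normality of $(R_J-\mathbb{E}R_J)/\sqrt{{\rm Var}(R_J)}$ comes from Theorem~\ref{Thm-clt-RKm} applied with $m=1$ and $K$ replaced by $J\to\infty$. Alternatively, it follows from a CLT for quadratic forms $\pmb\epsilon^\top\pmb P_{\mathcal C_J}\pmb\epsilon + 2\bnu_{J,\rm cyclic}^\top\pmb\epsilon$ (de Jong / Rotar type), whose key condition is $\max\text{eig}$ of the projection, equal to $1$, being small relative to $\sqrt{\operatorname{tr}}\asymp J$. Slutsky then gives \eqref{eq:thm3.2}. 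For normal errors, the variance of $R_J$ is exactly \eqref{eq:var:RK} with $m=1$ and $K\to J$, and the power formula follows as before.

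We expect the main obstacle to be Step 2, specifically handling the deterministic bias mismatch between the two centerings, together with verifying that the cyclic parts on $\mathcal{U}$ and on the full graph have equal norm.
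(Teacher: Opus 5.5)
\paragraph{Step~2 is a gap that cannot be closed.} You say the bias terms are negligible, but you compute $\sigma^2(2/J-2/K)/\sqrt{{\rm Var}(R_J)}=O(1)$. It is in fact not $o(1)$, and reinterpreting $\psi_J^2$ cannot absorb it, because $\psi_J^2=0$ under $H_0$. Take Gaussian errors and $\bnu\in\mathcal{L}$. Then $R_J=\sigma^2\chi^2_r/\binom{J}{2}$ with $r=\binom{J-1}{2}$, so $\sqrt{{\rm Var}(R_J)}=\frac{2\sigma^2}{J}\sqrt{(J-2)/(J-1)}$, while $\mathbb{E}R_K=\sigma^2(1-2/K)$ and $R_K-\mathbb{E}R_K=O_p(1/K)$. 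Hence
\[
\frac{R_J-R_K}{\sqrt{{\rm Var}(R_J)}}=\frac{\chi^2_r-r}{\sqrt{2r}}-\Bigl(1-\frac{J}{K}\Bigr)\sqrt{\frac{J-1}{J-2}}+o_p(1)\Rightarrow\mathcal{N}(-1,1).
\]
So with $R_J$ computed from merits fitted on $\mathcal{U}$ (your reading, and the paper's), the displayed $\mathcal{N}(0,1)$ limit is false, and no completion of Step~2 exists.

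For general errors under Condition~\ref{4th.moment}, the null shift is $-\sqrt{2}\,\sigma^2/\sqrt{\mathbb{E}\epsilon^4-\sigma^4}$, which is never zero. If $\epsilon^2$ is degenerate, ${\rm Var}(R_J)$ is only of order $J^{-3}$, so your claim ${\rm Var}(R_J)\asymp J^{-2}$ also needs ${\rm Var}(\epsilon^2)>0$.

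The paper's two-line proof does not avoid this. It quotes from Theorem~\ref{Thm-clt-RKm} a CLT for $R_J$ centered at $\sigma^2+\psi_J^2$. But that theorem's centering $\psi_K^2+\sigma^2/m$ differs from $\mathbb{E}R_{m,K}=\psi_K^2+\frac{\sigma^2}{m}(1-\frac{2}{K})$ by asymptotically one null standard deviation. The intermediate display in the paper's proof therefore has exactly the defect you spotted. For the same reason, your Step~3 cannot cite Theorem~\ref{Thm-clt-RKm} literally. Only your alternative route is sound: exact centering at $\mathbb{E}R_J$ plus a de Jong-type CLT.

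\paragraph{What your steps do prove.} Steps~1--3 establish a corrected statement. One version is $\{R_J-(1-2/J)R_K-\psi_J^2\}/\sqrt{{\rm Var}(R_J)}\Rightarrow\mathcal{N}(0,1)$. Another keeps the original form but computes the residuals in $R_J$ from the full-graph merits; then both statistics have per-pair residual variance $\sigma^2(1-2/K)$ and the bias cancels.

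Either version should be restricted to local alternatives with $J\psi_J^2=O(1)$. Your bound $\psi_K^2/\sqrt{{\rm Var}(R_J)}\lesssim (J/K)^2J\psi_J^2$ covers that regime. Outside it, ``dominated by the signal'' is not a distributional argument: if $\psi_J^2$ stays bounded and bounded away from zero, the ratio is of order $J^3/K^2$, which need not vanish when only $J=o(K)$ is assumed.

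Within the corrected statement, your proof is the paper's argument carried out properly. The paper invokes only $R_K\xrightarrow{p}\sigma^2$, which is too weak against the normalization $\sqrt{{\rm Var}(R_J)}\asymp 1/J$. What is needed is $R_K-\sigma^2=o_p(1/J)$. That is exactly what your variance-ratio bound, your $\psi_K^2$ bound, and your Step~1 identity $\|\bnu_{\rm cyclic}\|=\|\bnu_{J,\rm cyclic}\|$ supply, and the paper states none of them.
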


Localizing the search for LOF to a subset $\mathcal{U} \subset \mathcal{V}$ serves two purposes. First, it permits identification and estimation of $\sigma^2$ using the items in $\mathcal{V}\setminus \mathcal{U}$; without this separation, $\sigma^2$ is not identifiable, and testing for LOF is not possible. Second, it concentrates the test on departures from the null within $\mathcal{U}$, enabling detection when $\lim J\psi_J^2 > 0$ even if $\lim  K\psi_K^2 = 0$.  

\medskip

Theorem \ref{thm:nij=1} can be extended to sparse graphs, i.e., to PCGs in which $n_{ij}=0$ holds for a large proportion of pairs $(i,j)$ and $n_{ij}=1$ holds for the remaining pairs. Let $b_{ij}=\mathbb{I}_{\{(i,j)\in\mathcal{E}_K\}}$ indicate whether $(i,j)$ is an edge in the $K^{th}$ comparison graph. Define:
\begin{align}\label{eq:RjRk:reduced}
R_{J}'=~&\frac{\sum_{(i,j)\in\,\mathcal{U}_K} b_{ij}(Y_{ij}-(\widehat{\mu }_{i,J}-\widehat{\mu }_{j,J}))^{2}}{\sum_{(i,j)\in\,\mathcal{U}_K}b_{ij}}, ~ 
R_{K}'=~\frac{\sum_{(i,j)\in\mathcal{V}_K} b_{ij}(Y_{ij}-(\widehat{\mu }_{i}-\widehat{\mu }_{j}))^{2}}{\sum_{(i,j)\in\,\mathcal{V}_K}b_{ij}},
\end{align}
where $\psi_J^{'2}=~(\sum_{(i,j)\in\,\mathcal{V}_K}b_{ij})^{-1}\sum_{(i,j)\in\,\mathcal{U}_K}b_{ij}\nu_{ij,\,\rm cyclic}$ and $\widehat{\mu}_{1,J},\ldots, \widehat{\mu}_{J,J}$ are computed on $\mathcal{U}$. A modification of Theorem \ref{thm:nij=1} holds provided the set of indicators $\{b_{ij}\}_{1\le i <j \le K}$ satisfies $\sum_{i:(i,j)\in\,\mathcal{U}_K} b_{ij} \to \infty$ and $\sum_{i:(i,j)\in\,\mathcal{V}_K} b_{ij} \to \infty$ for every $j$. In addition, it is required that 
\begin{align}\label{eq:diverge:triads}
\sum_{(i,j),(i,k),(j,k)\in\,\mathcal{U}_K} \mathbb{I}(b_{ij}+b_{ik}+b_{jk}=3) \to \infty,
\end{align}
where $\mathbb{I}(b_{ij}+b_{ik}+b_{jk}=3)=1$ only if the triad $(i,j,k)$ is connected. Condition \eqref{eq:diverge:triads} is necessary but not sufficient for $\lim  J\psi_J^2 \to c> 0$; otherwise LOF is not detectable. 

\bigskip

The foregoing discussion naturally leads to an extension to random graphs, which capture large--scale comparison systems in which interactions arise randomly, such as online platforms or crowdsourced environments. Recently, several authors have investigated such large random PCGs, e.g., \cite{han2020,gao2023}. In particular, if we replace the constants $\{b_{ij}:\, 1\leq i<j\leq K\}$ with the RVs $\{B_{ij}:\, 1\leq i<j\leq K\}$, where $B_{ij}$ are Bernoulli$(p_K)$, then the resulting graph is an Erdős–Rényi random graph with parameters $(K,p_K)$. Let $\widetilde{R}_J$, $\widetilde{R}_K$, and $\widetilde{\psi}_J^{2}$ denote the modifications of \eqref{eq:RjRk:reduced} obtained by replacing $b_{ij}$ with $B_{ij}$.

\begin{theorem}\label{thm:random:graph}
Suppose that Conditions \ref{iid.errors}, \ref{4th.moment} and \ref{condition:U,V} hold. Assume further that $p_K\geq c\,(\log J)^{1+\epsilon}/J$ for some $c,\epsilon>0$. Then as $K\to\infty$,  $\widetilde{\psi}_J^2/\psi_J^2 \xrightarrow{p} 1$ and 
\begin{align}\label{eq:lof:random}
\frac{\widetilde{R}_J-\widetilde{R}_K-\widetilde{\psi}_J^2}{\sqrt{{\rm Var}({\widetilde{R}_J})}}\Rightarrow \mathcal{N}(0,1).
\end{align}
\end{theorem}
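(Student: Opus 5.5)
The plan is to condition on the random graph and reduce Theorem \ref{thm:random:graph} to the fixed-design version of Theorem \ref{thm:nij=1}, i.e., the modification stated for the deterministic indicators $\{b_{ij}\}$. Since the errors $\epsilon_{ijk}$ are independent of $\{B_{ij}\}$, conditionally on $\{B_{ij}=b_{ij}\}$ the statistics $\widetilde{R}_J,\widetilde{R}_K,\widetilde{\psi}_J^2$ coincide with $R_J',R_K',\psi_J^{'2}$ in \eqref{eq:RjRk:reduced}. It therefore suffices to show that, with probability tending to one, the realized indicators satisfy the requirements listed before the theorem. These are: every vertex $j$ has diverging degree within $\mathcal{U}_K$ and within $\mathcal{V}_K$, the subgraphs on $\mathcal{U}_K$ and $\mathcal{V}_K$ are connected (so the LSEs $\widehat{\mu}_{i,J}$ and $\widehat{\mu}_i$ are unique), and \eqref{eq:diverge:triads} holds. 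A conditional CLT whose limit $\mathcal{N}(0,1)$ does not depend on the conditioning then integrates to the unconditional statement \eqref{eq:lof:random} by dominated convergence applied to the conditional characteristic functions.

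For the degree and connectivity requirements, note that the degree of $j$ within $\mathcal{U}_K$ is Binomial$(J-1,p_K)$ with mean at least $c(\log J)^{1+\epsilon}$. A Chernoff bound gives $P(\deg_{\mathcal{U}}(j)\le \tfrac12 Jp_K)\le \exp(-c'(\log J)^{1+\epsilon})$. This is $o(J^{-1})$, and in fact $o(K^{-1})$ for vertices whose degree is counted in $\mathcal{U}$, so a union bound over $J$ vertices shows all degrees diverge. For $\mathcal{V}_K$, the mean is $Kp_K\ge Jp_K$. Since $J=o(K)$, we cannot simply union bound over $K$ vertices using $J$; I would instead use $Kp_K\ge c(K/J)(\log J)^{1+\epsilon}$. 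If this fails to dominate $\log K$, I would restrict the requirement to what the proof of Theorem \ref{thm:nij=1} really uses, namely the averaged degree conditions. Connectivity follows because $p_K$ lies above the Erd\H{o}s--R\'enyi threshold $\log J/J$ by the factor $(\log J)^{\epsilon}$. The triad count in \eqref{eq:diverge:triads} has mean $\binom{J}{3}p_K^3\gtrsim (\log J)^{3+3\epsilon}\to\infty$. A second-moment computation (pairs of triangles sharing one edge contribute $J^4p^5$, which is negligible relative to $(J^3p^3)^2$ since $Jp\to\infty$) gives concentration.

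For $\widetilde{\psi}_J^2/\psi_J^2\xrightarrow{p}1$, write $\widetilde{\psi}_J^2$ as a ratio. The denominator $\sum_{\mathcal{V}}B_{ij}$ is Binomial with mean $\binom K2 p_K$ and concentrates at relative rate $(K^2p_K)^{-1/2}$. The numerator is $\sum_{\mathcal{U}}B_{ij}w_{ij}$, with $w_{ij}$ the squared cyclic components, and has mean $p_K\sum w_{ij}$ and variance $p_K\sum w_{ij}^2$. Provided $\max w_{ij}/\sum w_{ij}\cdot p_K^{-1}\to0$, Chebyshev yields relative concentration. I would secure this from boundedness of $\gamma_*^{\rm min}$ as in Proposition \ref{prop 3.1}, together with $\sum w_{ij}\asymp J^2\psi_J^2$ and $J\psi_J^2$ bounded below in the regime of interest. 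The residuals here are computed with estimated $\widehat{\mu}$ on a random graph, so the effective cyclic part is the projection onto the cycle space of the sampled graph rather than $\bnu_{\rm cyclic}$ itself.

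The main obstacle is showing that this random projection preserves $\|\bnu_{J,\rm cyclic}\|^2$ up to the factor $p_K$. To do this, I would bound the difference between the projection residual on the sampled graph and $B_{ij}\nu_{ij,\rm cyclic}$ via the spectral gap of the normalized Laplacian of $G(J,p_K)$. That gap is $1-O((Jp_K)^{-1/2})$ with high probability under the stated $p_K$. Hence the linear component absorbed by $\widehat{\mu}_J$ is a vanishing fraction, which gives both the ratio limit and negligibility of the cross terms in the CLT.
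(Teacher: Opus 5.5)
Your conditioning scheme is a reasonable architecture. In one respect it is more careful than the paper's proof, which writes $\widetilde{R}_K$ through $\bD\pmb{H}_K\bD$, i.e.\ the complete-graph projector, even though $\widehat{\bmu}$ is the least squares fit on the sampled graph. The genuine gap is that all the distributional content, and in particular the centering, is delegated to the fixed-design ``modification of Theorem~\ref{thm:nij=1}''. The paper only asserts that result, and it fails at the scale a CLT needs.

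Put $E_{\mathcal U}=\sum_{\mathcal U}B_{ij}$ and $E_{\mathcal V}=\sum_{\mathcal V}B_{ij}$. Work conditionally on $\mathcal{G}$, on the event that both graphs are connected, and under $H_0$. The residual sums of squares in \eqref{eq:RjRk:reduced} (with $\widehat{\mu}_{\cdot,J}$ fitted on $\mathcal U$) then have means $\sigma^2(E_{\mathcal U}-J+1)$ and $\sigma^2(E_{\mathcal V}-K+1)$, so
\[
\mathbb{E}\bigl(\widetilde{R}_J-\widetilde{R}_K\mid\mathcal{G}\bigr)=\sigma^2\Bigl(\frac{K-1}{E_{\mathcal V}}-\frac{J-1}{E_{\mathcal U}}\Bigr)\approx-\frac{2\sigma^2}{Jp_K},
\qquad
\sqrt{{\rm Var}(\widetilde{R}_J)}\approx\sigma^2\sqrt{\frac{2}{E_{\mathcal U}}}\approx\frac{2\sigma^2}{J\sqrt{p_K}}
\]
for Gaussian errors. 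Under Condition~\ref{4th.moment} the standard deviation is still $O\bigl(1/(J\sqrt{p_K})\bigr)$.

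So the bias is at least of order $p_K^{-1/2}$ standard deviations. For Gaussian errors and $p_K\equiv 1$, which the hypothesis allows, the standardized statistic tends to $\mathcal{N}(-1,1)$. For $p_K\to 0$ it drifts to $-\infty$. Hence \eqref{eq:lof:random} cannot hold with the centering $\widetilde{\psi}_J^2$, whatever the route.

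The $J-1$ degrees of freedom absorbed by $\widehat{\mu}_{\cdot,J}$ are a vanishing \emph{fraction} of $E_{\mathcal U}$, but they are not $o\bigl(\sqrt{{\rm Var}(\widetilde{R}_J)}\bigr)$. Your spectral-gap step rests on the same confusion of relative with absolute negligibility. For the cyclic signal it happens to be harmless in the local regime $J\psi_J^2=O(1)$, where the absorbed part is of order $\psi_J^2/(Jp_K)$, but it has to be checked at the CLT scale.

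A provable version recenters by the conditional mean, for example by dividing the residual sums of squares by $E_{\mathcal U}-(J-1)$ and $E_{\mathcal V}-(K-1)$. For that version your scheme can work, provided you do two things you currently omit:
\begin{itemize}
\item supply an actual conditional CLT for the quadratic form (e.g.\ de Jong's theorem, which the paper invokes);
\item show ${\rm Var}(\widetilde{R}_J\mid\mathcal{G})/{\rm Var}(\widetilde{R}_J)\xrightarrow{p}1$.
\end{itemize}
The paper's proof does not resolve the centering either; it only shows that the noise term tends to $\sigma^2$ in probability.

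Two smaller points. First, the claim $\widetilde{\psi}_J^2/\psi_J^2\xrightarrow{p}1$ involves no projection at all, since $\widetilde{\psi}_J^2$ is an explicit function of the $B_{ij}$ and the $\nu_{ij,\rm cyclic}$; what matters is the normalization. You use the denominator $\sum_{\mathcal V}B_{ij}$ (mean $\binom{K}{2}p_K$) with numerator mean $p_K\|\bnu_{J,\rm cyclic}\|^2$. Your own computation then gives a ratio tending to $\binom{J}{2}/\binom{K}{2}\to 0$. The displayed definition has to be read as $\sum_{\mathcal U}B_{ij}\nu_{ij,\rm cyclic}^2/\sum_{\mathcal U}B_{ij}$, after which your Chebyshev argument suffices.

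Second, the worry about degrees in $\mathcal V$ is unfounded. Since $Kp_K\ge c(K/J)(\log J)^{1+\epsilon}$ and $K/J\to\infty$, we get $Kp_K/\log K\to\infty$. A Chernoff bound plus a union bound over all $K$ vertices therefore works directly.
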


Theorem~\ref{thm:random:graph} shows that LOF can be detected even on very sparse graphs, a somewhat unexpected phenomenon, provided the subset $\mathcal{U}$ on which departures from \eqref{nu.ij=mu.i-mu.j} may occur is known. The condition $p_K\geq c\,(\log J)^{1+\epsilon}/J$ ensures that $\mathcal{U}$ is connected and that the number of triangles within $\mathcal{U}$ diverges \citep{Janson2011}, which is a stochastic analogue of \eqref{eq:diverge:triads}. Since $\widetilde{\psi}_J^2/\psi_J^2 \xrightarrow{p} 1$, detection is possible whenever $J\psi_J^2 \to c>0$. By Proposition~\ref{prop 3.1}, this occurs only if $s = O(J)$. In contrast with Theorems \ref{Thm-clt-RKm} and \ref{thm:nij=1}, the expression for ${\rm Var}(\widetilde{R}_J)$ does not simplify even when the errors are normally distributed. This is because the quadratic form is weighted by independent Bernoulli$(p_K)$ variables, so the effective error term is the product of a $\mathcal{N}(0,\sigma^2)$ random variable and a Bernoulli$(p_K)$ random variable, and is therefore no longer normally distributed.

\section{Simulation study} \label{sec.num.study}

We consider both finite and large comparison graphs.

\medskip

\subsection{Finite comparison graphs}

Section \ref{section:kend-smith} compares the proposed methodology to a classical alternative, namely the test of \cite{kendall1940}. In Section \ref{section:f-test} it is shown that the proposed methodology has adequate power relative to a benchmark test which uses explicit information on $\bnu_{\rm cyclic}$. Sections \ref{section:mod-unbalanced-graphs} and \ref{section:high-unbalanced-graphs} explore the behavior of the proposed methods on moderately and highly unbalanced comparison graphs where the statistics $R_n^{(2)}$ and $R_n^{(3)}$ play an important role.

\subsubsection{Comparing $R_n^{(1)}$ with the Kendall--Smith test}
\label{section:kend-smith}

We begin by comparing the proposed procedure with the classical test of \cite{kendall1940}. Although originally developed for binary PCD, the Kendall--Smith test applies more broadly and remains a standard tool for assessing LOF; see, e.g., \cite{Kingsley2010, Romdhani2014, Laird2018}. The \cite{kendall1940} statistic counts the number of cyclic triads, that is,
$$T_{{\rm KS}} = \sum_{1 \le i < j < k \le K} T_{ijk}$$
where $T_{ijk}=1$ if the triad $(i,j,k)$ is cyclical. Operationally, let $W_{ij}=\sum_{k=1}^{n_{ij}}\mathbb{I}(Y_{ijk}>0),\ Z_{ij}=\mathbb{I}(W_{ij}\ge n_{ij}/2)$ and set
\[
T_{ijk}= Z_{ij} Z_{jk} Z_{ki} + Z_{ik} Z_{kj} Z_{ji}.
\]
Then $T_{ijk}=1$ if either $Z_{ij}=Z_{jk}=Z_{ki}=1$ or $Z_{ik}=Z_{kj}=Z_{ji}=0$. In particular, $Z_{ij}=Z_{jk}=Z_{ki} = 1$ corresponds to the cycle $i \succ j \succ k \succ i$, where $\succ$ denotes the usual preference relation. The second term in the display above is associated with the cycle $i \succ k \succ j \succ i$. 

The approach of Kendall and Smith can be naturally generalized to cardinal PCD. Assume for simplicity that $n_{ij}=m$ for all $(i,j)$ and for each triad $(i,j,k)$, define the cyclic sum  
$$S^{\rm card}_{ijk} = \bar S_{ij} + \bar S_{jk} + \bar S_{ki}.$$
Here ${\rm Var}(S^{card}_{ijk})=3\sigma^2/m$. When \eqref{nu.ij=mu.i-mu.j} holds then $\mathbb{E}(S^{card}_{ijk}) = 0$, however $\mathbb{E}(S^{card}_{ijk}) \neq 0$ indicates the presence of a cyclic component. Assuming normality of errors, at $5\%$ level, a natural cardinal analogue of the Kendall-Smith test statistic is
$$T_{\mathrm{KS}}^{\mathrm{card}} = \sum_{1 \le i < j < k \le K}T_{ijk}^{\rm card}, $$
where $T_{ijk}^{\rm card} = \mathbb{I}( \sqrt{m}\,|S^{card}_{ijk}|/\sqrt{3\,\widehat{\sigma}^2}\, > 1.96 )$. 
Table \ref{table:finite:compare-all} presents a comparison of the power of $T_{{\rm KS}}$,  $T_{{\rm KS}}^{{\rm card}}$ and $R_n^{(1)}$. The power of the tests  $T_{{\rm KS}}$ and $T_{{\rm KS}}^{{\rm card}}$ was calculated by first computing their critical values under the null and then estimating the rejection probabilities under the alternative. 

\begin{table}[h]
\centering
\caption{Estimated power of the tests $T_{{\rm KS}}$, $T_{\mathrm{KS}}^{\mathrm{card}}$ and $R_n^{(1)}$ when $K=30$, $m=10$, $\bnu_{\rm cyclic}=\bc_{(1,2,3)}+\bc_{(1,2,4)}$ and normal errors. Critical values and rejection probabilities are computed using $10^5$ simulation runs.} 

\label{table:finite:compare-all}
\begin{tabular}{l|cccccc} \hline
 Test statistic & $T_{{\rm KS}}$ & & $T_{\mathrm{KS}}^{\mathrm{card}}$ &  & $R_n^{(1)}$  &   \\ \hline
 Empirical power & 0.078 & & 0.496 &  & 0.818 &   \\ \hline
\end{tabular} 
\end{table}

Table \ref{table:finite:compare-all} shows that the Kendall--Smith test has limited power for detecting LOF. Incorporating cardinal information improves performance, but the resulting procedure remains only moderately powerful. This reflects a structural limitation of the Kendall--Smith approach, which relies on local triad--based discrepancies. In contrast, the proposed approach attains substantially higher power by aggregating information across the full collection of paired comparisons. 

\subsubsection{Comparing $R_n^{(1)}$ with the regression based $F$-test}
\label{section:f-test}

When the minimal model is known and the errors are Gaussian with constant variance, the regression--based $F$ statistic coincides with the likelihood ratio test. In this setting, the $F$--test is uniformly most powerful invariant (UMPI) under the natural group of orthogonal transformations and is also most powerful among unbiased tests; see \cite{Lehmann2005}. It therefore provides a natural benchmark for evaluating the performance of $R_n^{(1)}$. To fix ideas, suppose that the minimal model is of the form $\bnu_{\rm cyclic}= \gamma_1\bc_{(1,2,3)}+\gamma_2\bc_{(1,2,4)}+\gamma_3\bc_{(1,2,5)}$. If so, testing \eqref{LOF:K-infty} reduces to an $F$--test against the null $H_0:\gamma_1=\gamma_2=\gamma_3=0$.

\begin{table}[h]
\centering
\caption{Estimated power of the $F$ and $R_n^{(1)}$ tests when $K=30$, $m=10$ and normal errors. Critical values and rejection probabilities are computed using $10^5$ simulation runs.}
\label{table:compare:regression}
\begin{tabular}{cccccc} \hline
& True Model & & \multicolumn{3}{c}{Empirical Power}\\
  &  $(\gamma_1,\gamma_2,\gamma_3)$  &  &  & $R_n^{(1)}$  & $F$ \\ \hline
 & $(1/2,1/2,1/2)$ &  &  & 0.366 & 0.709 \\
  & $(3/4,3/4,3/4)$ &  &  & 0.456 & 0.972 \\
  & $(1,1,1)$ &  &  & 0.996 & 0.998 \\
\hline
\end{tabular} 
\end{table}

Table \ref{table:compare:regression} shows that the power increases with $\|\bnu_{\rm cyclic}\|^{2}$. It also indicates that the $F$--test has higher power than $R_n^{(1)}$, although the gap narrows as $\|\bnu_{\rm cyclic}\|^{2}$ and the number of cyclic triads increase. However, when the minimal model is misspecified, the power of the $F$--test can be substantially lower than that of $R_n^{(1)}$. For example, in the setting of Table \ref{table:compare:regression} with $(\gamma_1,\gamma_2,\gamma_3)=(1,1,1)$, the $F$--test based on predictors $\{\bc_{(1,2,6)}, \bc_{(1,2,7)}, \ldots, \bc_{(1,2,15)}\}$ has power $0.348$, well below that of $R_n^{(1)}$. These results show that while the $F$--test is optimal under correct specification, $R_n^{(1)}$ is considerably more robust.

\subsubsection{Assessing $R_n^{(1)}$ on moderately unbalanced graphs}
\label{section:mod-unbalanced-graphs}

We assess the performance of $R_n^{(1)}$ on moderately imbalanced comparison graphs. To do so we let $n_{ij}$ be ${\rm Bin}(m,p)$ RVs with $m\in \{10,15,20,25,30\}$, $p\in\{0.3, 0.4,0.5,1\}$. The true underlying model is assumed to be $\bnu_{\rm cyclic}= \gamma_1\bc_{(1,2,3)}+\gamma_2\bc_{(1,2,4)}+\gamma_3\bc_{(1,2,5)}$. 

\begin{table}[h]
\centering
\caption{Estimated power of $R_n^{(1)}$ test for $K=30$ and normal errors.  Critical values and rejection probabilities are computed using $10^5$ simulation runs.}
\label{table:finite:items}
\begin{tabular}{lllccc} \hline
 & $m$ & $p$ & $(\gamma_1,\gamma_2,\gamma_3)$ &   & Power \\ \hline
 & 10 & 1 & $(1/2,1/2,1/2)$ &   & 0.366 \\
 & 20 & 0.5 & $(1/2,1/2,1/2)$ &   & 0.377 \\ 
 & 30 & 0.33 & $(1/2,1/2,1/2)$ &   & 0.358 \\  \hline
\end{tabular} 
\end{table}

Table \ref{table:finite:items} shows that, with $mp=10$ fixed, the power of $R_n^{(1)}$ remains stable under moderate imbalance and is comparable to that observed for balanced graphs.

\subsubsection{Assessing LOF on highly unbalanced PCGs}
\label{section:high-unbalanced-graphs}

In Sections \ref{section:kend-smith}--\ref{section:mod-unbalanced-graphs} the PCGs studied are either fully balanced or only moderately unbalanced. Consequently, $R_n^{(1)}$ is the appropriate test statistic. Under the regimes described in Theorem \ref{Thm-LOF.2&3} the PCG may be highly unbalanced. This section compares the relative performance of the three test statistics $R_n^{(1)}$, $R_n^{(2)}$, and $R_n^{(3)}$ in such circumstances. Throughout this comparison, we fix  $K = 30$ and set  $\bnu_{\rm cyclic} =\gamma_1 \bc_{(1,2,3)} + \gamma_2 \bc_{(1,2,4)}$. To generate highly unbalanced PCG, we set:
\[
n_{ij}=\begin{cases}
    20 , \quad& \text{if}\quad  j=i+1,\text{ and when }i=1,j=3\\
    10 , \quad& \text{if}\quad  i=2,j=4\\
    \mathrm{Bin}(5, 0.5), \quad& \text{otherwise},
\end{cases}
\]
which yields a deterministic path graph augmented by two additional low--weight deterministic edges and small random independent edges elsewhere. To construct the statistics  $R_n^{(2)}$  and $R_n^{(3)}$, we threshold the comparison counts by retaining only  $n_{ij} \ge 10$ and $n_{ij} \ge 20$, respectively. Their performance is reported on in Table~\ref{table:finite:diff-nu}. 

\begin{table}[h]
\centering
\caption{Estimated powers of tests $R_n^{(1)}$, $R_n^{(2)}$, and $R_n^{(3)}$ for $K=30$, $\bnu_{\rm cyclic}=\,\gamma_1 \bc_{(1,2,3)}+\gamma_2 \bc_{(1,2,4)}$ and normal errors. Critical values and rejection probabilities are computed using $10^5$ simulation runs.}
\label{table:finite:diff-nu}
\begin{tabular}{cccccc} \hline
& True Model & & \multicolumn{3}{c}{Empirical Power}\\
& $(\gamma_1,\gamma_2)$ & & $R_n^{(1)}$ & $R_n^{(2)}$ & $R_n^{(3)}$ \\ \hline
& $(1/2,0)$   & & 0.107 & 0.678 & 0.689 \\
& $(1/2,1/2)$ & & 0.278 & 0.972 & 0.957 \\
\hline
\end{tabular}
\end{table}

The test $R_n^{(1)}$ performs poorly relative to $R_n^{(2)}$ and $R_n^{(3)}$. This is likely because $r_1$, the degrees of freedom of $R_n^{(1)}$, are substantially larger than $r_2$ and $r_3$, the degrees of freedom associated with $R_n^{(2)}$ and $R_n^{(3)}$, respectively, while the triads carrying information on LOF are confined to $\mathcal{E}_2$ and $\mathcal{E}_3$; see Theorem~\ref{theorem:test:powerless}.

\subsection{Large comparison graphs}

We consider complete and random graphs with $K \in \{30,50,70,100,200\}$ and $p_K \in \{1,1/2,1/3\}$. The true model is cyclic, with the first four cyclic triads having unit coefficients:
$$ \bnu_{\rm cyclic}
= \bc_{(1,2,3)} + \bc_{(1,2,4)} + \bc_{(1,2,5)} + \bc_{(1,2,6)}.$$
We assume that $\mathcal{U}=\{1,2,3,4,5,6\}$ is given and $\mathcal{V}\setminus\mathcal{U}=\{7,8,\ldots,K\}$. This corresponds to $20\%$ cyclic items when $K=30$ and to $3\%$ when $K=200$. LOF is tested using \eqref{eq:thm3.2} when $p_K=1$ and \eqref{eq:lof:random} when $p_K<1$. 

There are a variety of ways by which the aforementioned test can be applied. For example, when $p_K = 1$ and the errors are normally distributed, one can estimate $\sqrt{{\rm Var}(R_J)}$ using \eqref{eq:var:RK}, with $J$ instead of $K$ and plugging in an estimate for $\sigma^2$ derived from the observation in $\mathcal{V}\setminus\mathcal{U}$. Alternatively, $\sqrt{{\rm Var}(R_J)}$ can be estimated by bootstrap methods \citep{Davison1997}. When $p_K<1$ Equation \eqref{eq:var:RK} can no longer be applied and a resampling based estimator is necessary. We have experimented with various resampling methods for implementing the tests in \eqref{eq:thm3.2} and \eqref{eq:lof:random} and have found that the method described below works best both in approximating the level of the test and in terms of power. 

Concretely, after observing the PCG $(\mathcal{G},\mathcal{Y})$: $(i)$ estimate $\widehat{\bmu}$ from the full graph; and $(ii)$ calculate $\widehat{\pmb{e}'} =\bY' - \bB' \widehat{\bmu}'$ the residuals from fitting the model on $(\mathcal{G}',\mathcal{Y}')$, the PCG obtained after removing the vertices, edges and samples associated with the items in $\mathcal{U}$. The $b^{th}$ bootstrap PCG (or sample), $b=1,\ldots,B$ is generated by setting
\begin{align*}
Y_{b,ij}^* = (\widehat{\mu}_i-\widehat{\mu}_j) + \hat{e}_{b,ij}', 
\end{align*}
where $\hat{e}_{b,ij}'$ is sampled with replacement from $\widehat{\pmb{e}'}$. For each bootstrap sample we calculate
\begin{align*}
T_b= \binom{J}{2}^{-1}\sum_{i,j\in\mathcal{U}} (Y_{b,ij}^*-(\widehat{\mu }_{i}-\widehat{\mu }_{j}))^{2}.
\end{align*}
Power is assessed by $\sum_{b=1}^B\mathbb{I}(T_b \ge T_{\rm obs})/(B+1)$ where $T_{\rm obs}$ is the observed value of the test statistic. Empirical levels and estimated power are reported in Table~\ref{table:nij=1}. 

\begin{table}[h]
\centering
\caption{Empirical levels and powers of the tests \eqref{eq:thm3.2} and \eqref{eq:lof:random} on Erd\"os--R\'eyni graphs with parameters $(K,p_K)$ and normal errors. Rejection probabilities are computed using $10^5$ simulation runs. The level of the tests is computed when $\bnu_{\rm cyclic}=\pmb{0}$ and the power when $\bnu_{\rm cyclic} = \bc_{(1,2,3)} + \bc_{(1,2,4)}+ \bc_{(1,2,5)} + \bc_{(1,2,6)}$.}
\label{table:nij=1}
\begin{tabular}{c|cccc c| cccc c} \hline
& \multicolumn{5}{c}{Level} & \multicolumn{5}{c}{Power}\\
\diagbox[width=0.8\dimexpr \textwidth/12+2\tabcolsep\relax, height=0.8cm]{$p_K$}{ $K$} & 30 & 50 & 70 & 100 & 200 & 
30 & 50 & 70 & 100 & 200\\ \hline
$1$ & 
0.076 & 0.067 & 0.058 & 0.054 & 0.045 &
0.962 & 0.956 & 0.949  & 0.945  & 0.938\\
$1/2$ & 
0.071 & 0.057 & 0.052 & 0.051 & 0.041 & 
0.545  & 0.553 & 0.515 & 0.524 &0.418\\
$1/3$ & 
0.073 & 0.065 & 0.051 & 0.048 & 0.043 &
0.434 & 0.407 & 0.399 & 0.372 & 0.306\\\hline
\end{tabular}
\end{table}
\FloatBarrier

\noindent The proposed tests are somewhat liberal when the sample size is small, but attain the nominal level of $0.05$ once $K$ exceeds approximately $70$. The tests \eqref{eq:thm3.2}  and \eqref{eq:lof:random} effectively detect LOF even when the graphs are large and sparse. 

\bigskip

\section{Illustrative example} \label{section:illustrative:example}

Sporting competitions are a major source of public interest, study, and entertainment, and they provide a rich setting for statistical analysis and ranking methods. Accurately quantifying relative team strength is important for both competitive and economic reasons, shaping internal decisions and external outcomes such as media coverage, sponsorship, and fan engagement \citep{wang2025nba}. Existing approaches are largely based on linear transitive models \citep{sinuany1988, cassady2005, Langville2012, barrow2013} and relatively little attention has been devoted to LOF testing and the quantification of intransitivity.

To demonstrate the practical relevance of our proposed methodology, we analyze game outcomes from the National Basketball Association (NBA) spanning the 2015--2016 through 2024--2025 seasons, excluding the 2019--2020 season due to disruptions caused by the COVID-19 pandemic. The resulting dataset, sourced from \url{https://www.basketball-reference.com/leagues/}, comprises nine complete seasons, which we treat as temporally consecutive. In each season, there are $K=30$ teams, and the comparison graph is moderately unbalanced. In this setting, it is natural to test for LOF using $R_n^{(1)}$. Table \ref{pval:table} reports the LOF p--value for each season. 

\begin{table}[!ht]
\centering
\caption{Testing LOF: p-values for the seasons for teams violating LOF. Critical values are obtained using $10^5$ Monte Carlo smaples.}
\label{pval:table}
 \resizebox{\textwidth}{!}{  
\begin{tabular}{r rrrr rrrrr}
  \hline
 Season & \rotatebox{0}{2015-16} & \rotatebox{0}{2016-17} & \rotatebox{0}{2017-18} & \rotatebox{0}{2018-19} &
 \rotatebox{0}{2020-21} & \rotatebox{0}{2021-22} & \rotatebox{0}{2022-23} & \rotatebox{0}{2023-24} & \rotatebox{0}{2024-25} \\ 
  \hline
p-value & 0 & 0.00017 & 0 & 0.00008  &
0 & 0 & 0.00001 & 0.04466 & 0 \\ 
   \hline
\end{tabular}}
\end{table}

\medskip

Table \ref{pval:table} shows that the null hypothesis of LST is rejected, with highly significant p-values. Next, we identify teams that contribute to LOF, which we refer to as cyclical teams; see equation~\eqref{bnu:decomposed:linear:cyclic}. A model--based approach for identifying such teams was proposed by \cite{singh2026}. Here, we adopt a more liberal method inspired by \cite{kendall1940}. Let $\hat{\pi}_{ij}$ denote the empirical win fraction of team $i$ over team $j$. A triad $(i,j,k)$ is said to form a cycle if either $\hat{\pi}_{ij} \ge 0.5$, $\hat{\pi}_{jk} \ge 0.5$, and $\hat{\pi}_{ki} \ge 0.5$, or $\hat{\pi}_{ij} \le 0.5$, $\hat{\pi}_{jk} \le 0.5$, and $\hat{\pi}_{ki} \le 0.5$. We iteratively remove cyclical triads via a direct search procedure until no cycles remain. Applying this method to the 2023--2024 season identifies 12 cyclical teams: Chicago Bulls, Denver Nuggets, Orlando Magic, Los Angeles Lakers, Charlotte Hornets, Phoenix Suns, Milwaukee Bucks, Indiana Pacers, Sacramento Kings, Oklahoma City Thunder, Cleveland Cavaliers, and Memphis Grizzlies. This represents a substantial fraction of the league, indicating that intransitivity is not a marginal phenomenon but a pervasive feature of the data.

Table~\ref{table:frequency:NBA} reports the number of seasons in which each team is identified as cyclical.

\begin{table}[!ht]
\centering
\caption{Teams violating LOF and their frequency for seasons $\{$2020-2021, \dots, 2024-2025$\}$.}
\label{table:frequency:NBA}
\setlength\tabcolsep{2pt}
\begin{tabular}{r|llllllllllllllllllllllllllllll}
  \hline
{Team} & \rotatebox{90}{Chicago Bulls} & \rotatebox{90}{Denver Nuggets} & \rotatebox{90}{Atlanta Hawks} & \rotatebox{90}{Brooklyn Nets} & \rotatebox{90}{Cleveland Cavaliers} & \rotatebox{90}{Los Angeles Clippers} & \rotatebox{90}{Indiana Pacers} & \rotatebox{90}{Miami Heat} & \rotatebox{90}{Phoenix Suns} & \rotatebox{90}{Sacramento Kings} & \rotatebox{90}{Charlotte Hornets} & \rotatebox{90}{Memphis Grizzlies} & \rotatebox{90}{Minnesota Timberwolves} & \rotatebox{90}{Oklahoma City Thunder} & \rotatebox{90}{Toronto Raptors} & \rotatebox{90}{Boston Celtics} & \rotatebox{90}{Dallas Mavericks} & \rotatebox{90}{Los Angeles Lakers} & \rotatebox{90}{New York Knicks} & \rotatebox{90}{Orlando Magic} & \rotatebox{90}{Utah Jazz} & \rotatebox{90}{Washington Wizards} & \rotatebox{90}{Detroit Pistons} & \rotatebox{90}{Golden State Warriors} & \rotatebox{90}{Houston Rockets} & \rotatebox{90}{Milwaukee Bucks} & \rotatebox{90}{New Orleans Pelicans} & \rotatebox{90}{San Antonio Spurs} & \rotatebox{90}{Philadelphia 76ers} & \rotatebox{90}{Portland Trail Blazers} \\ \hline
  Frequency & 8 & 8 & 6 & 6 & 6 & 6 & 5 & 5 & 5 & 5 & 4 & 4 & 4 & 4 & 4 & 3 & 3 & 3 & 3 & 3 & 3 & 3 & 2 & 2 & 2 & 2 & 2 & 2 & 1 & 1 \\ 
   \hline
\end{tabular}
\end{table}

To quantify the structural overlap of cyclic teams across seasons, we compute the pairwise Jaccard similarity index \citep{bag2019} between the sets of cyclic teams in any two seasons $t$ and $t'$, defined as 
$$J(t,t') = \frac{|\mathcal{S}_t \cap \mathcal{S}_{t'}|}{|\mathcal{S}_t \cup \mathcal{S}_{t'}|},$$ 
where $\mathcal{S}_t \subset \mathcal{V}$ denotes the set of cyclic teams in season $t$. This yields a symmetric $9 \times 9$ similarity matrix with unit diagonal. Values approaching $1$ indicate highly conserved cyclic structure across seasons, whereas values near $0$ indicate largely disjoint, and hence more stochastic, violations of LST. The Jaccard indices for the NBA seasons are reported in Table~\ref{Jaccard:table}.

\begin{table}[!ht]
\centering
\caption{The Jaccard indices for the seasons for cyclic teams.}
\label{Jaccard:table}
 \resizebox{\textwidth}{!}{ 
\begin{tabular}{r|rrrrrrrrr}
  \hline
 & 2015-16 & 2016-17 & 2017-18 & 2018-19 & 2020-21 & 2021-22 & 2022-23 & 2023-24 & 2024-25 \\ 
  \hline
2015-16 & 1.00 & 0.62 & 0.35 & 0.22 & 0.26 & 0.16 & 0.28 & 0.31 & 0.29 \\ 
  2016-17 & 0.62 & 1.00 & 0.30 & 0.25 & 0.29 & 0.14 & 0.30 & 0.50 & 0.39 \\ 
  2017-18 & 0.35 & 0.30 & 1.00 & 0.17 & 0.45 & 0.42 & 0.33 & 0.24 & 0.29 \\ 
  2018-19 & 0.22 & 0.25 & 0.17 & 1.00 & 0.27 & 0.37 & 0.29 & 0.25 & 0.13 \\ 
  2020-21 & 0.26 & 0.29 & 0.45 & 0.27 & 1.00 & 0.40 & 0.38 & 0.29 & 0.27 \\ 
  2021-22 & 0.16 & 0.14 & 0.42 & 0.37 & 0.40 & 1.00 & 0.35 & 0.14 & 0.24 \\ 
  2022-23 & 0.28 & 0.30 & 0.33 & 0.29 & 0.38 & 0.35 & 1.00 & 0.24 & 0.50 \\ 
  2023-24 & 0.31 & 0.50 & 0.24 & 0.25 & 0.29 & 0.14 & 0.24 & 1.00 & 0.47 \\ 
  2024-25 & 0.29 & 0.39 & 0.29 & 0.13 & 0.27 & 0.24 & 0.50 & 0.47 & 1.00 \\ 
   \hline
\end{tabular}}
\end{table}

To formalize the season--over--season persistence of LOF violations, we model each team's LOF status as a discrete--time Markov chain with binary state space $\Omega = \{0,1\}$. Let $X_{i,t} = \mathbb{I}(i \in \mathcal{S}_t)$ denote the state of team $i$ in season $t$, where state $1$ indicates membership in the set of teams violating LOF and state $0$ indicates compliance with LST. Assuming time--homogeneity, the first-order transition probabilities $p_{kl} = \mathbb{P}(X_{i,t+1} = l \mid X_{i,t} = k)$, for $k,l \in \{0,1\}$, are estimated by maximum likelihood from pooled adjacent-season transitions across all teams, yielding
\[
\hat{p}_{kl} = \frac{\sum_{i \in \mathcal{V}} \sum_{t} \mathbb{I}(X_{i,t} = k, X_{i,t+1} = l)}{\sum_{i \in \mathcal{V}} \sum_{t} \mathbb{I}(X_{i,t} = k)}.
\]
Applying this procedure to the NBA data yields the following estimated transition matrix:
\begin{center}
\begin{tabular}{r|rr}
  \hline
 & 0 & 1 \\ 
  \hline
0 & 0.61 & 0.39 \\ 
1 & 0.49 & 0.51 \\ 
  \hline
\end{tabular}
\end{center}
These estimates indicate that a cyclic team has approximately a 50\% probability of remaining cyclic in the subsequent season, suggesting moderate persistence of LOF violations over time.

Finally, we investigate whether cyclic teams identified in season $t$ can predict the presence of LST violations in complete comparison graphs sampled in season $t+1$. To this end, following each season we partition the teams into two disjoint sets: cyclic and transitive. Applying this procedure across all available seasons, we treat the identified cyclic teams as candidates (our $\mathcal{U}$) and evaluate the power of the test \eqref{eq:thm3.2} using $10^5$ bootstrap samples of complete comparison graphs constructed from the subsequent season’s data. The resulting empirical rejection probabilities, evaluated at the $5\%$ and $10\%$ significance levels, are reported in Table~\ref{predict:table}.

\begin{table}[!ht]
\centering
\caption{Rejection probabilities of \eqref{eq:thm3.2} for detecting LOF. }
\label{predict:table}
\begin{tabular}{l|llllllll} \hline
Current   & 2015-16  & 2016-17  & 2017-18  & 2018-19  & 2020-21  & 2021-22  & 2022-23  & 2023-24  \\
Future & 2016-17  & 2017-18  & 2018-19  & 2020-21  & 2021-22  & 2022-23  & 2023-24  & 2024-25  \\ \hline
Level--$0.05$   & 0.317 & 0.614 & 0.022 & 0.253 & 0.699 & 0.152 & 0.188 & 0.165 \\
Level--$0.10$ & 0.470 & 0.268 & 0.048 & 0.352 & 0.769 & 0.236 & 0.282 & 0.236 \\ \hline
\end{tabular}
\end{table}

These results, together with the highly significant p--values reported above, suggest that the set of teams violating LOF is not stable across seasons. Consistent with this, the Jaccard similarity matrix and the transition matrix indicate substantial turnover in the identity of cyclic teams. Consequently, knowledge of cyclic teams in the current season does not reliably predict LOF in the subsequent season, indicating that while lack of fit is pervasive, it does not exhibit strong temporal stability or predictive structure.

\section{Discussion} \label{section:discussion}

Linear stochastic transitivity, as exemplified by \eqref{nu.ij=mu.i-mu.j}, underpins much of the analysis of paired comparison data. However, empirical violations of this assumption are frequently reported across various applied domains. For example, in competitive sports, cyclic dominance naturally arises but it is not captured by standard ranking models \citep{Cain2000, vanOurs2024, vanOurs2025, singh2026}. Despite these well--documented violations, the methodological framework for formally detecting LOF has remained underdeveloped, and the test introduced by \cite{kendall1940} has been the predominant approach. As demonstrated in Section \ref{sec.num.study}, this test exhibits poor power for detecting LOF when $K$ is finite and is fundamentally unsuited for large sparse comparison graphs.

To address these limitations, this paper proposes a collection of tests for LOF. Each of the proposed tests is adapted to a different regime that governs the growth of the number of paired comparisons. For example, when $K$ is finite and Condition \ref{Condition:all.nij} holds, then using the statistic $R_{n}^{(1)}$ is appropriate. The statistical power of this test is comparable, in some settings, to that of the benchmark $F$-test, which relies on prior knowledge of the cyclic parameters. If the regime is best described by Conditions \ref{Condition:sqrt(m)} or \ref{Condition:sqrt(n)} then the statistics $R_{n}^{(2)}$ or $R_{n}^{(3)}$ should be used. Our simulations show that the latter are considerably more powerful in highly unbalanced graphs. Theorem~\ref{theorem:test:powerless} establishes the conditions under which the tests $R_{n}^{(1)},~ R_{n}^{(2)}$ and $R_{n}^{(3)}$ are consistent. In addition, the paper also addresses the situation where the number of items satisfies $K\to \infty$. Within this framework, several scenarios are considered: first, the balanced case $n_{ij}=m>1$ is analyzed; then we set $n_{ij}=1$, and finally we allow sparse graphs where $n_{ij}\in\{0,1\}$. The limiting distribution of the proposed statistics is derived, and conditions guaranteeing consistency are discussed, cf. Proposition \ref{prop 3.1}. Also note that when all $n_{ij}\le 1 $ additional information is required, either in the form of the localization of the set of cyclic items or knowledge of $\sigma^2$.   

As noted in the Introduction, the current paper provides a road map for developing tests for LOF in other paired comparison models, the best known of which is the Bradley--Terry (BT) model. In fact, the framework proposed here can be directly applied to the BT model. For example, the BT analogue of the statistic \eqref{Eq.Rn} is 
\begin{align*}
 \sum_{(i,j)\in \mathcal{E}_1}n_{ij}(\frac{{S}_{ij}}{n_{ij}}-\frac{\exp(\widehat{\mu }_{i}-\widehat{\mu }_{j})}{1+\exp(\widehat{\mu }_{i}-\widehat{\mu }_{j})})^{2},   
\end{align*}
where $S_{ij} = \sum_{k=1}^{n_{ij}}Y_{ijk}$ and $Y_{ijk}$ are IID Bernoulli RVs with success probability $\exp(\mu_i-\mu_j)/(1+\exp(\mu_i-\mu_j))$. Here $\widehat{\mu }_{1},\ldots, \widehat{\mu }_{K}$ are the merit parameters estimated from the BT model. The test above is motivated by \cite{GSK1969} and \cite{beaver1977} who showed that certain generalized linear models can be analyzed as linear models. Recently, \cite{hendrickx2020} emphasized the robustness of this formulation in high-dimensional settings; see also \cite{oliveira2018} for a related approach. Counterparts of $R_n^{(2)}$ and $R_n^{(3)}$ are constructed similarly. Preliminary numerical experiments indicate that these tests are effective in practice. Analogues of the tests proposed for large sparse graphs described in Section \ref{section:infinite:K} are also possible.  Nevertheless, the resulting procedures do not account for the binary nature of the observations and are therefore not fully satisfactory. 
Thus, the development of more powerful, principled tests for the BT and other binary PCD models that leverage the ideas introduced in this paper remains an open and important problem for future work. 

A limitation of the current approach to large sparse PCGs is the assumption that the set of potentially cyclic items, $\mathcal{U}$, is known a priori, whereas in practice it is typically unknown. Identifying the subset $\mathcal{U}$ is itself a challenging problem and may be as difficult as detecting lack of fit in sparse regimes. While screening procedures based on normal approximations are possible, numerical experiments indicate that they are inherently low--powered, as they rely on single--pair deviations rather than aggregated cyclic structure. Developing methods for reliably identifying $\mathcal{U}$ in sparse, large PCGs, or characterizing the limits of such identification, is therefore an important direction for future work.

Condition \ref{condition:U,V} stipulates that $J = o(K)$, i.e., cyclicality is localized to a small fraction of items. Although this is a reasonable assumption in many applications, a different type of result may be required when $J = O(K)$. However, the family of statistics proposed here does not appear to be well suited for this setting. An alternative family based on statistics with kernels of the form 
\begin{align*}
\| \widehat{\bmu}(\mathcal{V}\setminus\mathcal{U},\mathcal{V}) - \widehat{\bmu}(\mathcal{V}\setminus\mathcal{U},\mathcal{V}\setminus\mathcal{U}) \|_{(\mathcal{G},\mathcal{Y})}^2 
\end{align*}
may be appropriate. Here, $\widehat{\bmu}(\mathcal{A},\mathcal{B})$ denotes a vector (or subvector) of merits indexed by $i\in \mathcal{A}$ and estimated using the PCG on $\mathcal{B}$ and $\|\cdot\|_{(\mathcal{G},\mathcal{Y})}$ denotes a possibly data dependent norm. The statistic above compares the merits of items in $\mathcal{V}\setminus\mathcal{U}$ estimated in two ways: once using the full graph and once using only the acyclic subgraph. Since this test statistic does not belong to the family of tests proposed here, we do not pursue it further.

Finally, once LOF is detected, we conclude that $\bnu = \bnu_{\rm linear} + \bnu_{\rm cyclic}$ with $\bnu_{\rm cyclic} \neq \pmb{0}$. In this case, $\bnu$ is likely to be intransitive, but not necessarily so. As shown in Example~2.2 of \cite{singh2026}, intransitivity occurs when the cyclic component $\bnu_{\rm cyclic}$ dominates the linear component $\bnu_{\rm linear}$. \cite{singh2026} shows how $\bnu_{\rm cyclic}$ can be decomposed, modeled, and estimated from the data as a minimal, interpretable linear combination of the cyclical vectors $\{\bc_{(i,j,k)} \in \mathbb{R}^{\binom{K}{2}} : 1 \leq i < j < k \leq K\}$ described in Section \ref{section:infinite:K}. Such models effectively capture  cyclical preferences in the data.

\section*{Acknowledgments}

The work of Rahul Singh was supported by New Faculty Seed Grant No. MI03038G the Indian Institute of Technology Delhi, India. The work of Ori Davidov was partially supported by the Israeli Science Foundation Grant No. 2200/22 and gratefully acknowledged.

\bibliographystyle{apalike}
\bibliography{glm_ref}

\appendix

\section{Proofs}

For $s\in\{1,2,3\}$ let 
$$\bQ_s= \bD_s^{1/2} (\pmb{I}-\bB (\bB^{\top} \bD_1\bB)^{+} \bB^{\top} \bD_1)\bD_1^{-1/2},$$ 
where the matrices $\bD_s$'s are defined in the text and 
so $\pmb{\Omega}_1=\sigma^2 \bQ_1\bQ_1^{\top}$ and $\pmb{\Omega}_s=\sigma^2 \lim\, \bQ_s\bQ_s^{\top}$ for $s=2,3$.

\begin{lemma} \label{lemma.M.converge}
Let $\bP_{1}$ be the $|\bnu|\times |\bnu|$ diagonal matrix whose $(i,j)^{th}$ diagonal element is $1$ if $(i,j)\in\mathcal{E}_1$ and $0$ otherwise, and $\bX=\bD_1^{1/2}\bB$ and $\bX_{n}$ is the submatrix of $\bX$ corresponding to non-zero rows in $\bP_{1}$. Then, $\bP_{\bX_{n}} = \bX_{n}(\bX_{n}^\top \bX_{n})^{+} \bX_{n}^\top$ is the orthogonal projector onto column space of $\bX_{n}$. Furthermore,
\[
  \bQ_1=\, \bP_1(\pmb{I} - \bX_{n} (\bX_{n}^\top \bX_{n})^{+} \bX_{n}^\top).
  \]
Consequently, $\bQ_1$ is (the restriction to $\bP_{1}$ of) an orthogonal projector complement, and hence its entries are uniformly bounded (in fact, in $[-1,1]$). Furthermore,
\[
  {\rm rank}\,(\bQ_1\bQ_1^{\top})= {\rm rank}\,(\bQ_1) = |\mathcal{E}_1| - (K-1).
  \]
\end{lemma}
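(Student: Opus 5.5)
The plan is to rewrite $\bQ_1$ entirely in terms of $\bX=\bD_1^{1/2}\bB$. Once that is done, the projector structure, the entry bound and the rank all follow from standard facts about orthogonal projectors and graph incidence matrices. Throughout, $\bD_1^{-1/2}$ is read as $(\bD_1^{+})^{1/2}$, since rows of $\bB$ indexed by non-edges are multiplied by $n_{ij}=0$. I also use that $\mathcal{G}$ is connected, as assumed whenever the LSE is used.

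\emph{Step 1: algebraic rewriting.} Three identities do the work:
\begin{itemize}
\item $\bB^{\top}\bD_1\bB=\bX^{\top}\bX$;
\item $\bD_1^{1/2}(\bD_1^{+})^{1/2}=\bP_1$;
\item $\bB^{\top}\bD_1(\bD_1^{+})^{1/2}=\bB^{\top}\bD_1^{1/2}\bP_1=\bX^{\top}\bP_1=\bX^{\top}$.
\end{itemize}
The last equality holds because the rows of $\bX$ outside $\mathcal{E}_1$ are zero. Substituting into the definition gives
\[
\bQ_1=\bP_1-\bX(\bX^{\top}\bX)^{+}\bX^{\top}.
\]
Since the rows and columns of $\bX(\bX^{\top}\bX)^{+}\bX^{\top}$ outside $\mathcal{E}_1$ vanish, we have $\bP_1\bX(\bX^{\top}\bX)^{+}\bX^{\top}=\bX(\bX^{\top}\bX)^{+}\bX^{\top}$. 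Hence $\bQ_1=\bP_1(\pmb{I}-\bX(\bX^{\top}\bX)^{+}\bX^{\top})$.

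Next, deleting the zero rows of $\bX$ does not change $\bX^{\top}\bX=\bX_n^{\top}\bX_n$. So the nonzero block of $\bX(\bX^{\top}\bX)^{+}\bX^{\top}$ is exactly $\bX_n(\bX_n^{\top}\bX_n)^{+}\bX_n^{\top}$, which gives the stated form of $\bQ_1$ (understood with the obvious zero padding).

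\emph{Step 2: projector property and entry bound.} That $\bP_{\bX_n}$ is the orthogonal projector onto the column space of $\bX_n$ is the standard Moore--Penrose property. It is symmetric, and idempotent since $\bX_n(\bX_n^{\top}\bX_n)^{+}\bX_n^{\top}\bX_n=\bX_n$. Therefore, restricted to $\mathcal{E}_1$, the matrix $\bQ_1$ is the complementary projector $\pmb{I}-\bP_{\bX_n}$, with operator norm at most one. Each entry satisfies $|e_a^{\top}(\pmb{I}-\bP_{\bX_n})e_b|\le\|e_a\|\|e_b\|=1$. In particular the diagonal entries lie in $[0,1]$.

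\emph{Step 3: rank.} Because $\bQ_1$ is symmetric and idempotent on its block, $\bQ_1\bQ_1^{\top}=\bQ_1$, so the two ranks coincide. This rank equals $|\mathcal{E}_1|-{\rm rank}(\bX_n)$.

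Now $\bX_n=\bD_{n}^{1/2}\bB_{\mathcal{E}_1}$, where $\bD_n$ is the invertible diagonal matrix of positive $n_{ij}$ on $\mathcal{E}_1$. Hence ${\rm rank}(\bX_n)={\rm rank}(\bB_{\mathcal{E}_1})$, and $\bB_{\mathcal{E}_1}^{\top}$ is the oriented incidence matrix of $\mathcal{G}$. The key remaining fact is that for a connected graph on $K$ vertices this incidence matrix has rank $K-1$. To see it, note that $\bB_{\mathcal{E}_1}\pmb{x}=0$ forces $x_i=x_j$ on every edge, so the null space is ${\rm span}(\b1)$ by connectivity.

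This last step is the only place a hypothesis is genuinely needed, namely connectivity of $\mathcal{G}$. I would state it explicitly. Without it the rank would be $|\mathcal{E}_1|-(K-c)$, where $c$ is the number of connected components. The rest is bookkeeping with the zero rows and pseudo-inverses.
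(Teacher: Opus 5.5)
Your proposal is correct and follows essentially the same route as the paper. The identities $\bX^{\top}\bX=\bB^{\top}\bD_1\bB$, $\bD_1^{1/2}(\bD_1^{+})^{1/2}=\bP_1$ and $\bB^{\top}\bD_1(\bD_1^{+})^{1/2}=\bX^{\top}\bP_1$ reduce $\bQ_1$ to $\bP_1$ minus the zero-padded projector onto the column space of $\bX_n$. The entry bound then comes from the complementary projector, and the rank is $|\mathcal{E}_1|-{\rm rank}(\bX_n)=|\mathcal{E}_1|-(K-1)$ by connectivity.

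Your additions are refinements of the same argument rather than a different one. You observe that $\bQ_1$ is itself symmetric idempotent, so $\bQ_1\bQ_1^{\top}=\bQ_1$. You derive ${\rm rank}(\bX_n)=K-1$ from the null space of the incidence matrix instead of citing the Laplacian rank. You also state the connectivity hypothesis explicitly, which the paper leaves implicit.
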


\begin{proof}
Notice that 
\[
  \bD_1^{1/2}(\bD_1^{+})^{1/2} = \bP_1 \quad \text{and} \quad
  \bX^\top \bX=\bB^\top \bD_1 \bB,
  \]
consequently, 
\begin{align*}
\bQ_1
&= \bD_1^{1/2}(\pmb{I} - \bB(\bB^\top \bD_1 \bB)^{+} \bB^\top \bD_1)\bD_1^{+})^{1/2}\\
&= \bP_1 - \bD_1^{1/2}\bB(\bB^\top \bD_1 \bB)^{+} \bB^\top \bD_1 (\bD_1^{+})^{1/2}.
\end{align*}
Now $\bD_1(\bD_1^{+})^{1/2} = \bD_1^{1/2}\bP_1$, so that $\bB^\top \bD_1(\bD_1^{+})^{1/2} = \bB^\top \bD_1^{1/2}\bP_1 = \bX^\top \bP_1$. Since $\bX$ has zero rows outside $\mathcal{E}_1$, this reduces to $\bX^\top \bP_1 = \bX_{n}^\top$. Therefore
\[
  \bD_1^{1/2}\bB(\bB^\top \bD_1 \bB)^{+} \bB^\top \bD_1(\bD_1^{+})^{1/2} 
  = \bX(\bX^\top \bX)^{+} \bX^\top \bP_1 
  = \bP_1 \bX_{n}(\bX_{n}^\top \bX_{n})^{+} \bX_{n}^\top.
  \]
Therefore, we obtain
\[
  \bQ_1 \;=\; \bP_1(\pmb{I} - \bX_{n} (\bX_{n}^\top \bX_{n})^{+} \bX_{n}^\top).
  \]
This proves the first part. Next, since $\pmb{I}-\bP_{\bX_{n}}$ is an orthogonal projector, its entries are bounded by $1$ in absolute value, and multiplication by $\bP_1$ simply pads zero rows/columns. Hence all entries are finite and uniformly bounded, independent of scaling of the entries of $\bD_1$. Now,
\[
  {\rm rank}\,(\bQ_1) = |\mathcal{E}_1| - {\rm rank}(\bX_{n}) = |\mathcal{E}_1| - {\rm rank}(\bB^T \bD_1 \bB)= |\mathcal{E}_1| - (K-1),
  \]
as for a connected graph ${\rm rank}(\bB^T \bD_1 \bB)=K-1$. Therefore,
\[
  {\rm rank}\,(\bQ_1\bQ_1^{\top})= {\rm rank}\,(\bQ_1) = |\mathcal{E}_1| - (K-1).
  \]
\end{proof}

\subsubsection*{Proof of Theorem \ref{Thm-LOF.1}:}
\begin{proof}
Let $\pmb{U}_{n}$ be the random vector of dimension $|\bnu|$ whose $(i,j)^{th}$ components is $U_{ij}=\sqrt{n_{ij}}(S_{ij}/n_{ij}-(\widehat{\mu }_{i}-\widehat{\mu }_{j}))$ when $(i,j)\in\mathcal{E}_1$ and $0$ otherwise. Note that 
$$
  \widehat\mu_i -\widehat\mu_j = (\pmb e_i- \pmb e_j)^{\top}\widehat\bmu = (\pmb e_i- \pmb e_j)^{\top} \bN^{+}\bS.
$$
  Let $\overline{\bS}_{\rm ALL}$ be the $|\bnu|\time 1$ vector whose $(i,j)^{th}$ element is equal to $S_{ij}/n_{ij}$'s if $(i,j)\in\mathcal{E}_1$ and $0$ otherwise. Now, using some algebra it can be shown that
\begin{align*} \label{thm1.pf.eq1}
\pmb{U}_n =&~ \bD_1^{1/2} (\pmb{I}-\bB (\bB^{\top} \bD_1\bB)^{+} \bB^{\top} \bD_1) \overline{\bS}_{\rm ALL}\\
=&~ \bD_1^{1/2} (\pmb{I}-\bB (\bB^{\top} \bD_1\bB)^{+} \bB^{\top} \bD_1)(\bD_1^{+})^{1/2} \bD_1^{1/2}\overline{\bS}_{\rm ALL}\\
=&~ \bQ_1\bD_1^{1/2}\overline{\bS}_{\rm ALL}, \numberthis
\end{align*}
where $\bQ_1= \bD_1^{1/2} (\pmb{I}-\bB (\bB^{\top} \bD_1\bB)^{+} \bB^{\top} \bD_1)(\bD_1^{+})^{1/2}$. Clearly, $\bD_1^{1/2}(\overline{\bS}_{\rm ALL}-\bnu) \stackrel{d}{=} \mathcal{N} (\bzero, \sigma^2 \pmb{H})$, where $\pmb{H}$ is a $|\bnu|\times |\bnu|$ diagonal matrix whose diagonal elements are $1$ if $(i,j)\in\mathcal{E}_1$ and $0$ otherwise. It follows from relation \eqref{thm1.pf.eq1}, that 
$$
\pmb{U}_n \stackrel{d}{=} \mathcal{N} (\bzero, \pmb{\Omega}_1) \text{ where } \pmb{\Omega}_1=\sigma^2 \bQ_1\pmb{H}\bQ_1^\top= \sigma^2 \bQ_1\bQ_1^\top.
$$
Next observe that
\begin{equation*}
R_{n}^{(1)}=\sum_{(i,j)\in \mathcal{E}}n_{ij}(\overline{S}_{ij}-(\widehat{\mu }_{i}-\widehat{\mu }_{j}))^{2}=\pmb{U}_{n}^\top\pmb{U}_{n}.
\end{equation*}
Further note that since $\pmb{\Omega}_1$ is symmetric and nonnegative definite we may write its spectral decomposition as $\pmb{\Omega}_1=\pmb{O}^\top\pmb{\Lambda }\pmb{O}$ where $\pmb{O}$ is an orthogonal matrix whose columns are the eigenvectors of $\pmb{\Omega}_1$ and $\pmb{\Lambda }$ is a diagonal matrix with nonnegative elements which are the eigenvalues of $\pmb{\Omega}_1$. Clearly $\pmb{U}_n=\pmb{\Omega}_1^{1/2}\bZ$ where $\bZ$ is a $\mathcal{N}_{|\bnu|}(0,\pmb{I})$ RV. It follows that
\begin{equation*}
\pmb{U}_n^\top\pmb{U}_n=(\pmb{\Omega}_1^{1/2}\bZ
)^\top(\pmb{\Omega}_1^{1/2}\bZ)=\bZ^\top\pmb{
  \Psi}_n \bZ=\bZ^\top\pmb{O}^\top\pmb{\Lambda EZ}=(
    \pmb{O}\pmb{Z})^\top\pmb{\Lambda }(\pmb{O}\pmb{Z})\stackrel{d}{=}\pmb{Z}^\top\pmb{\Lambda }\pmb{Z}= \sum_{i=1}^{|\bnu|}\lambda _{i}Z_{i}^{2},
\end{equation*}
since $\pmb{O}\pmb{Z}$ and $\bZ$  have the same distribution. Further, it is straightforward to verify that
\begin{align*}
\pmb{\Omega}_1 = \sigma^2 (\bD_1^{+})^{1/2} (\pmb{I}-\bB (\bB^{\top} \bD_1\bB)^{+} \bB^{\top} \bD_1) \, \bD_1^{+} (\pmb{I}-\bB (\bB^{\top} \bD_1\bB)^{+} \bB^{\top} \bD_1)^{\top} (\bD_1^{+})^{1/2}.
\end{align*}
Finally, the rank identity follows from Lemma \ref{lemma.M.converge}. This proves the first part.

Next, if $\bnu =\pmb{l}+\pmb{\delta}$ where $\pmb{l}\in\mathcal{L}$ and $\pmb{\delta}\in\mathcal{C}$ then
  \begin{align*}
  U_{ij}=\sqrt{n_{ij}}(\overline{S}_{ij}-(\widehat{\mu }_{i}-\widehat{\mu }_{j})) 
  =\sqrt{n_{ij}}(\overline{S}_{ij}-(\widehat{\mu }_{i}-\widehat{\mu }_{j}-{\delta}_{ij})+ {\delta}_{ij})= U_{ij}^* + \sqrt{{n_{ij}}}\,{\delta}_{ij},
  \end{align*}
  where $ U_{ij}^*= \sqrt{n_{ij}}(\overline{S}_{ij}-(\widehat{\mu }_{i}-\widehat{\mu }_{j}- {\delta}_{ij}))$. Notice that, here $U_{ij}^*$'s behave same as the $U_{ij}$ when $\bnu\in\mathcal{L}$. Consequently $\pmb{U}_n\sim \mathcal{N}_{|\bnu|} (\bD^{1/2}\,\pmb{\delta},\, \pmb{\Omega}_1)$. It follows that 
\begin{align*}
R_n^{(1)}=&~ \pmb{U}_n^{\top} \pmb{U}_n= 
  ((\pmb{\Omega}_1^{+})^{1/2}\pmb{U}_n )^\top\pmb{\Omega}_1 ((\pmb{\Omega}_1^{+} )^{1/2}\pmb{U}_n)= ((\pmb{\Omega}_1^{+})^{1/2} \pmb{U}_n
  )^\top\pmb{O}_1^{\top}\pmb{\Lambda}\pmb{O}_1 ((\pmb{\Omega}_1^{+})^{1/2}\pmb{U}_n)\\
=&~ \pmb{V}^{\top} \pmb{\Lambda} \pmb{V},
\end{align*}
where $\pmb{V}= \pmb{O}_1 (\pmb{\Omega}_1^{+})^{1/2}\pmb{U}$ is a $\mathcal{N}(\pmb{\phi}_1, \pmb{H})$ RV, $\pmb{\phi}_1= \pmb{O}_1 (\pmb{\Omega}_1^{+})^{1/2}\bD_1^{1/2}\, \pmb{\delta}$ and $\pmb{H}$ is a $|\bnu|\times |\bnu|$ diagonal matrix whose $(i,j)^{th}$ diagonal elements is $1$ if $(i,j)\in\mathcal{E}_1$ and $0$ otherwise. It immediately follows that 
$$
R_n^{(1)}= \sum_{i=1}^{r_1} \lambda_i (Z_i+ \phi_i)^2,
$$
where $\phi_i$ is the element of $\pmb{\phi}_1$ corresponding with the $i^{th}$ nonzero diagonal element of $\pmb{\Lambda}$ denoted by $\lambda_i$. 
\end{proof}

\subsubsection*{Proof of Theorem \ref{Thm-LOF.2&3}:}
\begin{proof}
Following steps similar to those in the proof of Theorem \ref{Thm-LOF.1}, we have
$$R_{n}^{(2)}=\pmb{U}_{n}^{{(2)}\top} \pmb{U}_{n}^{(2)},$$
  where
\begin{align*}
\pmb{U}_{n}^{(2)} 
=&~\bD_2^{1/2} (\pmb{I} - \bB (\bB^{\top} \bD_1\bB)^{+} \bB^{\top} \bD_1)\;\overline{\bS}_{\rm ALL}\\
=&~ \bD_2^{1/2} (\pmb{I} \bB (\bB^{\top} \bD_1\bB)^{+} \bB^{\top} \bD_1)(\bD_1^{+})^{1/2} \bD_1^{1/2}\;\overline{\bS}_{\rm ALL}\\
=&~ \pmb{Q}_2 \bD_1^{1/2}\;\overline{\bS}_{\rm ALL},
\end{align*}
where $\pmb{Q}_2=\bD_2^{1/2} (\pmb{I}- \bB (\bB^{\top} \bD_1\bB)^{+} \bB^{\top} \bD_1)(\bD_1^{+})^{1/2}$. Following arguments similar to those in Lemma \ref{lemma.M.converge}, it follows that for all large $n$, $\pmb{Q}_2$ has uniformly bounded entries. Furthermore, by the CLT we have $\bD_1^{1/2}(\overline{\bS}_{\rm ALL}-\bnu) \Rightarrow \mathcal{N} (\bzero, \sigma^2 \pmb{H})$. Now if $\lim_n \pmb{Q}_2$ exists, then it follows by Slutzky's Theorem that $\pmb{U}_{n}^{(2)}\Rightarrow \pmb{U}$ where $\pmb{U}$ follows a $\mathcal{N}_{|\bnu|}(0,\; \pmb{\Omega}_2)$, where $\pmb{\Omega}_2=\lim_n \pmb{Q}_2\;\pmb{Q}_2^{\top}$. 

Next, $R_{n}^{(2)}=\pmb{U}_{n}^{{(2)}\top} \pmb{U}_{n}^{(2)}$, and 
since $\pmb{\Omega}_2$ is symmetric and nonnegative definite we may write its spectral decomposition as $\pmb{\Omega }_2=\pmb{O}^\top\pmb{\Lambda }\pmb{O}$ where $\pmb{O}$ is an orthogonal matrix whose columns are the eigenvectors of $\pmb{\Omega }_2$ and $\pmb{\Lambda }$ is a diagonal matrix with nonnegative elements which are the eigenvalues of $\pmb{\Omega}_2$. Clearly $\pmb{U}=\pmb{\Omega}_2^{1/2}\bZ$ where $\bZ$ is a $\mathcal{N}_{|\bnu|}(0,\pmb{I})$ RV. It follows that
\begin{equation*}
\pmb{U}^\top\pmb{U}=(\pmb{\Omega}_2^{1/2}\bZ
)^\top(\pmb{\Omega}_2^{1/2}\bZ)=\bZ^\top\pmb{
  \Omega}_2 \bZ=\bZ^\top\pmb{O}^\top\pmb{\Lambda EZ}=(
    \pmb{O}\pmb{Z})^\top\pmb{\Lambda }(\pmb{O}\pmb{Z})\stackrel{d}{=}\pmb{Z}^\top\pmb{\Lambda }\pmb{Z}= \sum_{i=1}^{|\bnu|}\lambda _{i}Z_{i}^{2},
\end{equation*}
since $\pmb{O}\pmb{Z}$ and $\bZ$  have the same distribution. Now by Lemma \ref{lemma.M.converge} $r_2={\rm rank}(\bM_2)= |\mathcal{E}_2|-(K-1)$ so only $r_2$ eigenvalues of $\pmb{\Psi}_2$ are positive. We conclude that $R_{n}^{(2)}\Rightarrow\sum_{i=1}^{r_2}\lambda _{i}Z_{i}^{2}$ as stated. For $R_{n}^{(3)}$, the asymptotic distribution can be derived by following the same steps as in the case of $R_{n}^{(2)}$, so we skip the details. 

Next, we have to obtain the distribution of $R_n^{(s)}$, for $s\in\{2,3\}$, under the local alternatives. We obtain distribution for $R_n^{(3)}$ only, similar steps follows for $R_n^{(2)}$, and hence skipped. Assume notations in the proof of Theorem \ref{Thm-LOF.2&3}. Under the stated conditions
  \begin{align*}
  U_{ij}^{(3)}=\sqrt{n_{ij}}(\overline{S}_{ij}-(\widehat{\mu }_{i}-\widehat{\mu }_{j})) 
  =\sqrt{n_{ij}}(\overline{S}_{ij}-(\widehat{\mu }_{i}-\widehat{\mu }_{j}-n^{-1/2}{\delta}_{ij})+ n^{-1/2}{\delta}_{ij})= U_{ij}^* + \sqrt{\frac{n_{ij}}{n}}\,{\delta}_{ij},
  \end{align*}
  where $ U_{ij}^*= \sqrt{n_{ij}}(\overline{S}_{ij}-(\widehat{\mu }_{i}-\widehat{\mu }_{j}-n^{-1/2}{\delta}_{ij}))$. Consequently $\pmb{U}_n^{(3)} \Rightarrow \pmb{U}$ where $\pmb{U}$ is a $\mathcal{N}_{|\bnu|} (\pmb{\Xi}_3^{1/2}\,\pmb{\delta}_3, \pmb{\Omega}_3)$ RV. It follows that 
  \begin{align*}
  R_n^{(3)}=&~ \pmb{U}_n^{(3){\top}} \pmb{U}_n^{(3)} \Rightarrow 
  \pmb{U}^{\top} \pmb{U}= 
    ((\pmb{\Omega}_3^{+})^{1/2}\pmb{U})^\top \pmb{\Omega}_3((\pmb{\Omega}_3^{+})^{1/2}\pmb{U})= ((\pmb{\Omega}_3^{+})^{1/2}\pmb{U}
    )^\top\pmb{O}_3^{\top}\pmb{\Lambda}\pmb{O}_3 ((\pmb{\Omega }_3^{+})^{1/2}\pmb{U})\\
  =&~ \pmb{V}^{\top} \pmb{\Lambda} \pmb{V},
  \end{align*}
  where $\pmb{V}= \pmb{O}_3 (\pmb{\Omega}_3^{+})^{1/2}\pmb{U}$ is a $\mathcal{N}(\pmb{\phi}_3, \pmb{H})$ RV, $\pmb{\phi}_3= \pmb{O}_3 (\pmb{\Omega }_3^{+})^{1/2}\pmb{\Xi}_3^{1/2}\, \pmb{\delta}$ and $\pmb{H}$ is a $|\bnu|\times |\bnu|$ diagonal matrix whose $(i,j)^{th}$ diagonal elements is $1$ if $(i,j)\in\mathcal{E}_3$ and $0$ otherwise. It immediately follows that 
  $$R_n^{(3)}= \sum_{i=1}^{r_3} \lambda_i (Z_i+ \phi_i)^2,$$
where $\phi_i$ is the element of $\pmb{\phi}_3$ corresponding with the $i^{th}$ nonzero diagonal element of $\pmb{\Lambda}$ denoted by $\lambda_i$. 
\end{proof}

\subsubsection*{Proof of Theorem \ref{theorem:test:powerless}:}
\begin{proof}
From Theorem \ref{Thm-LOF.1}, we have $\pmb{\phi}_1= \pmb{O}_1 (\pmb{\Omega}_1^{+})^{1/2}\bD_1^{1/2}\, \pmb{\delta}$. Therefore, if for all triads $(i_1,i_2,i_3)$'s satisfying  \eqref{eq:linear:nu:restriction}, $(i_1,i_2,i_3)\notin \mathcal{E}_1$ then $\bD_1^{1/2}\, \pmb{\delta}=\pmb{0}$, implying that $\pmb{\phi}_1=\pmb{0}$. Similarly, from \ref{Thm-LOF.2&3}, we have $\pmb{\phi}_s= \pmb{O}_s (\pmb{\Omega }_s^{+})^{1/2}\pmb{\Xi}_s^{1/2}\, \pmb{\delta}$ for $s=2,3$. Therefore, if for all triads $(i_1,i_2,i_3)$'s satisfying  \eqref{eq:linear:nu:restriction}, $(i_1,i_2,i_3)\notin \mathcal{E}_s$ then $\pmb{\Xi}_s^{1/2}\, \pmb{\delta}=\pmb{0}$, implying that $\pmb{\phi}_s=\pmb{0}$.  

Consequently, if for all triads $(i_1,i_2,i_3)$'s satisfying  \eqref{eq:linear:nu:restriction}, $(i_1,i_2,i_3)\notin \mathcal{E}_s$ then the distribution of the test statistic are same under the null and the local alternatives. Therefore, the power of the tests  $R_n^{(s)}, s\in\{1,2,3\}$ does not exceed $\alpha$ and, therefore, the tests are not consistent.  
\end{proof}

\subsubsection*{Proof of Theorem \ref{Thm-clt-RKm}:}

The proof uses some properties of the matrix $\bB$ summarized in the following remark.
\begin{remark} \label{remark:gof:1}
The incidence matrix $\bB$ of a complete graph with $K$ vertices and $\frac{K(K-1)}{2}$ edges is a $\frac{K(K-1)}{2} \times K$ matrix. Each row corresponds to an edge and contains exactly one $+1$ and one $-1$ indicating the two connected vertices. 
Since $\bB$ represents a complete graph, the product $\bB\bB^T \in \mathbb{R}^{\frac{K(K-1)}{2} \times \frac{K(K-1)}{2}}$. Further, the rank of $\bB\bB^T$ is $k-1$, as the complete graph has rank deficiency of 1 in its Laplacian. Moreover, the eigenvalues of $\bB\bB^T$ are: $K$ with multiplicity $K-1$ and $0$ with multiplicity $\frac{K(K-1)}{2} - (K-1)$. 
Using spectral decomposition: $\bB\bB^T = \bQ \pmb{\Lambda} \bQ^T$, 
where $\bQ$ is an orthonormal eigenvector matrix and $\Lambda$ is the diagonal matrix of eigenvalues. Consequently 
\begin{equation*}
\bB\bB^T\bB\bB^T = (\bQ \pmb{\Lambda} \bQ^T)(\bQ \pmb{\Lambda} \bQ^T) = \bQ \pmb{\Lambda}^2 \bQ^T
\end{equation*}
Since $\pmb{\Lambda}^2$ squares the eigenvalues of $\pmb{\Lambda}$, we have $\bB\bB^T\bB\bB^T = K \bB\bB^T$.
\end{remark}

We continue with the proof of \textbf{Theorem \ref{Thm-clt-RKm}:}.

\medskip

\begin{proof}
Using $\widehat{\bmu}=\bN^+\bS$ and with some algebraic manipulation, we obtain
\begin{equation*}
R_{m,K}=\binom{K}{2}^{-1}\sum_{1\leq i<j\leq K} (\widehat{\nu}_{ij}-(\widehat{\mu }_{i}-\widehat{\mu }_{j}))^{2}= \binom{K}{2}^{-1}\pmb{V}_K^\top \pmb{V}_K,
\end{equation*}
where $\pmb{V}_K$ is a $\binom{K}{2}$ vector with $(i,j)^{th}$ entry $V_{ij}=\widehat{\nu}_{ij}-(\widehat{\mu }_{i}-\widehat{\mu }_{j})$ for $1\leq i<j\leq K$. Observe that $\bN^+=\frac{1}{K^2}\bN$ and $\bN=\bB^{\top}\bB$, where $\bB$ is the $\frac{K(K-1)}{2} \times K$ incidence matrix. Using $\bB\bB^T\bB\bB^T = K \bB\bB^T$ (see Remark \ref{remark:gof:1}), we have
\begin{align*}
\pmb{V}_K= \widehat{\bnu} - \bB \bN^{+}\bB^{\top}\widehat{\bnu}
= \widehat{\bnu} - \bB (\frac{1}{K^2}\bN)\bB^{\top}\widehat{\bnu}= \widehat{\bnu} - \frac{1}{K^2}\bB \bB^{\top}\bB \bB^{\top}\widehat{\bnu}=
\pmb{H}_K\widehat{\bnu},
\end{align*}
where $\pmb{H}_K= \pmb{I}-\frac{1}{K} \bB\bB^T$. Consequently, 
\[
R_{m,K} = \binom{K}{2}^{-1}\bV_K^\top  \bV_K= \widehat{\bnu}^{\top} \pmb{H}_K \widehat{\bnu}.
\]
Notice that for $1\leq i<j\leq K$
$${\rm Cov}(V_{ij},V_{kl})=\begin{cases}
    (K-2)\sigma^2/mK &\text{ if } (i,j)=(k,l),\\
    -\sigma^2/mK &\text{ if } (i,j)\neq (k,l) \text{ and } \{i,j\}\cap\{k,l\} \text{ is non empty},\\
    0 &\text{ if } (i,j)\neq (k,l) \text{ and } \{i,j\}\cap\{k,l\} \text{ is empty}.
\end{cases}$$ 

Next, $\widehat{\bnu}={\bnu}_{\rm linear}+{\bnu}_{\rm cyclic}+\pmb{\varepsilon}_K$, where $\pmb{\varepsilon}_K= (\varepsilon_{12},\ldots, {\varepsilon}_{K-1,K})$ and $\varepsilon_{ij} = \frac{1}{m} \sum_{k=1}^m \epsilon_{ijk}$. Therefore, using $\pmb{H}_K\,{\bnu}_{\rm linear}=\pmb{0}$, we get 
\[
R_{m,K} = \binom{K}{2}^{-1}({\bnu} + \pmb{\varepsilon}_K)^\top \pmb{H}_K ({\bnu} + \pmb{\varepsilon}_K) = \binom{K}{2}^{-1}{\bnu}_{\rm cyclic}^\top {\bnu}_{\rm cyclic} + 2\binom{K}{2}^{-1}\, {\bnu}_{\rm cyclic}^\top \pmb{\varepsilon}_K + \binom{K}{2}^{-1}\pmb{\varepsilon}_K^\top \pmb{H}_K \pmb{\varepsilon}_K.
\]
Using $\mathbb{E}[\varepsilon_{ij}]=0$ and expression for ${\rm Cov}(V_{ij},V_{kl})$ mentioned earlier, as $K\to\infty$ we have
$$
\binom{K}{2}^{-1}  {\bnu}_{\rm cyclic}^\top \pmb{\varepsilon}_K \xrightarrow[]{p} 0,\quad \text{and}\quad \binom{K}{2}^{-1} \pmb{\varepsilon}_K^\top \pmb{H}_K \pmb{\varepsilon}_K \xrightarrow[]{p} \sigma^2/m.
$$
Therefore, as $K\to\infty$,
\begin{align*}
R_{m,K} \xrightarrow[]{p}\lim_{K\to\infty}  \binom{K}{2}^{-1}\|{\bnu}_{\rm cyclic}\|^2\,+\, \sigma^2/m.
\end{align*}
This proves the first statement. 

Next, observe that
$\varepsilon_{ij}$'s, as defined above, are IID RVs with mean zero and variance $\sigma^2/m$. Furthermore, if $\epsilon_{ij}$'s have finite fourth moment then by Theorem 5.3 in \cite{DeJong1987} we have 
\[
\frac{R_{m,K} -   (\psi^2 + \sigma^2/m   )}{\sqrt{\operatorname{Var}(R_{m,K})}} \Rightarrow \mathcal{N}(0, 1),
\]
as $K\to\infty$. Next, when errors are normal, using the formula for the variance of the quadratic form of a normal random vector (see, \citealt{magnus1978}), we obtain ${\rm Var}(R_{m,K})=2\binom{K}{2}^{-2}\binom{K-1}{2}\frac{\sigma^4}{m^2} +4\,\binom{K}{2}^{-1}  \psi_K^2\,\frac{\sigma^2}{m}$. The approximate formula for the power function is obtained by using the limiting form of $R_{m,K}$ and setting $\psi_K^{2}=0$ in \eqref{eq:var:RK}. 
\end{proof}

\subsubsection*{Proof of Theorem \ref{thm:nij=1}:}
\begin{proof}
From Theorem \ref{Thm-clt-RKm}, as $K\to\infty$, $R_K\xrightarrow{p}\sigma^2$ and 
\begin{align}\label{eq:RJ:distr}
\frac{R_J-\sigma^2-\psi_J^2}{\sqrt{{\rm Var}({R_J})}}\Rightarrow \mathcal{N}(0,1).
\end{align}
Further, $J=o(K)$ allows to replace $\sigma^2$ by $R_K$ in \eqref{eq:RJ:distr}. The second part follows using steps similar to those in the proof of Theorem \ref{Thm-clt-RKm}.
\end{proof}

\subsubsection*{Proof of Theorem \ref{thm:random:graph}:}
\begin{proof}
Let $B_{\mathcal{V}}=\sum_{(i,j)\in\, \mathcal{V}}B_{ij}$, $B_{\mathcal{U}}=\sum_{(i,j)\in\, \mathcal{U}} B_{ij}$, and $\bD={\rm diag}(B_{1,2},\ldots,B_{K-1,K})$. Then, $B_{\mathcal{V}}/Kp_K\xrightarrow[]{p}1$ and  $B_{\mathcal{U}}/Jp_K\xrightarrow[]{p}1$. Using arguments similar to as in the proof of Theorem \ref{Thm-clt-RKm}, we get 
\[
\widetilde{R}_K = \frac{1}{B_{\mathcal{V}}}({\bnu} + \pmb{\varepsilon}_K)^\top \bD \pmb{H}_K \bD ({\bnu} + \pmb{\varepsilon}_K) = \frac{1}{B_{\mathcal{V}}} {\bnu}_{\rm cyclic}^\top \bD {\bnu}_{\rm cyclic} + 2\frac{1}{B_{\mathcal{V}}}\, {\bnu}_{\rm cyclic}^\top\bD \pmb{\varepsilon}_K + \frac{1}{B_{\mathcal{V}}}\pmb{\varepsilon}_K^\top \bD\pmb{H}_K \bD \pmb{\varepsilon}_K.
\]
Observe that, as $K\to\infty$,
$$
\frac{1}{B_{\mathcal{V}}}  {\bnu}_{\rm cyclic}^\top\bD \pmb{\varepsilon}_K \xrightarrow[]{p} 0,\quad  \frac{1}{B_{\mathcal{V}}}  \pmb{\varepsilon}_K^\top\bD \pmb{H}_K\, \bD\pmb{\varepsilon}_K \xrightarrow[]{p} \sigma^2 \quad \text{and}\quad
\frac{1}{B_{\mathcal{V}}} \sum_{(i,j)\in\,\mathcal{V}}B_{ij}{\nu}_{ij,\rm cyclic}^2 \xrightarrow[]{p}\, \psi_J^2.
$$
Moreover $\widetilde{R}_K$ can be rewritten as
\[
\widetilde{R}_K = \frac{1}{B_{\mathcal{V}}}(\bD({\bnu} + \pmb{\varepsilon}_K))^\top  \pmb{H}_K (\bD ({\bnu} + \pmb{\varepsilon}_K)).
\]
Consequently, if $\epsilon_{ij}$'s have finite fourth moment, using Theorem 5.3 in \cite{DeJong1987}, as $K\to\infty$ we get 
\begin{align*}
\frac{\widetilde{R}_J-\widetilde{R}_K-\widetilde{\psi}_J^2}{\sqrt{{\rm Var}({\widetilde{R}_J})}}\Rightarrow \mathcal{N}(0,1).
\end{align*}
\end{proof}

\end{document}